\newcommand{\bx} {{\bf x}}
\newcommand{\by} {{\bf y}}
\newcommand{\bX} {{\bf X}}
\newcommand{\bs}{\boldsymbol}
\newcommand{\DP}{\displaystyle}
\newcommand{\argmin}{\operatornamewithlimits{argmin}}
\newcommand{\eps}{\epsilon}
\newcommand{\BE}{\begin{equation}}
\newcommand{\EE}{\end{equation}}
\newcommand{\BEN}{\begin{equation}\nonumber}
\newcommand{\BA}{\begin{array}}
\newcommand{\EA}{\end{array}}
\newcommand{\BPA}{\begin{pmatrix}\begin{array}}
\newcommand{\EAP}{\end{array}\end{pmatrix}}
\newcommand{\BI}{\begin{itemize}}
\newcommand{\EI}{\end{itemize}}
\newcommand{\RR}{\mathbb{R}}
\newcommand{\hsigma}{\hat{\sigma}}
\newcommand{\homega}{\hat{\omega}}
\newcommand{\hbeta}{\hat{\beta}}
\newcommand{\hbetabs}{\hat{\bs{\beta}}}
 \newlength{\myindent}
\begin{document}

\title{An efficient GPU-Parallel Coordinate Descent Algorithm 
for Sparse Precision Matrix Estimation via Scaled Lasso
}

\titlerunning{GPU-Parallel CD for SPMESL}        

\author{Seunghwan Lee        \and
        Sang Cheol Kim \and
        Donghyeon Yu*      
}


\institute{
  							Seunghwan Lee \and {Donghyeon Yu} 
  							\at 
  							Department of Statistics, Inha University, Inchoen, Korea\\
  						  \email{dyu@inha.ac.kr}           
           \and
           Sang Cheol Kim \at
              Division of Bio-Medical Informatics, Center for Genome Science, National Institute of Health, Korea Centers for Disease Control
and Prevention, Cheongju, Korea
}

\date{Received: date / Accepted: date}

\maketitle

\renewcommand{\baselinestretch}{1.2}.
\begin{abstract}
	\noindent The sparse precision matrix plays an essential role in the Gaussian graphical model since a zero off-diagonal element 
indicates conditional independence of the corresponding two variables given others. In the Gaussian graphical model, many methods have been  
proposed and their theoretical properties are given as well. Among  
these, the sparse precision matrix estimation via scaled lasso (SPMESL) has an attractive feature to which the penalty level is automatically set to achieve the optimal convergence rate under the sparsity and invertibility conditions. 
Conversely,  
other methods need to be used in searching for the optimal tuning parameter. Despite  
such an advantage, the SPMESL has not been widely used due to its expensive computational cost. In this paper,
we develop a GPU-parallel coordinate descent (CD) algorithm for the SPMESL and numerically show that the the proposed algorithm is much faster than the least angle regression (LARS) tailored to the SPMESL.
Several comprehensive numerical studies are conducted to investigate the scalability of the proposed algorithm and the estimation performance of the SPMESL. The results show that the SPMESL has the lowest false discovery rate for all cases and the best performance in the case where the level of the sparsity of the columns is high.
\keywords{Gaussian graphical model \and graphics processing unit \and parallel computation \and tuning-free}
\end{abstract}

\section{Introduction}

The covariance matrix and its inverse are of main interest in multivariate analysis to model dependencies between variables.
Traditionally, these two parameters 
have been estimated by
the sample covariance matrix and its inverse based on
the maximum likelihood (ML) estimation.
Although these ML estimators are often asymptotically unbiased and simple to calculate,
there are some weaknesses when the number of variables is greater than that  of samples. This circumstance is also known as high-dimensional low-sample size (HDLSS) data.
For HDLSS data, it is known that the sample covariance matrix
becomes inefficient \citep{Yao2015}.
In addition, the inverse of the sample covariance matrix is undefined since the sample covariance matrix is singular for HDLSS data.
For the covariance matrix, many methods have been proposed 
 under some structural
conditions such as bandable structure and sparse structure
to improve the estimation efficiency \citep{Bickel2008a, Bickel2008b, Rothman2009, Wu2009, Cai2010, Cai2011a, Cai2012}.

To obtain  an estimator of the \textit{precision matrix} (i.e., inverse covariance matrix) for 
HDLSS data,
various approaches have been developed by adopting the sparsity
assumption that there are many zero elements in the matrix.
The existing methods can be categorized  
into four approaches:
covariance estimation-induced approach, ML-based approach,
regression approach, and constrained $\ell_1$-minimization approach.
The covariance estimation-induced approach considers an indirect estimation using 
the inversion
of the well-conditioned shrinkage covariance matrix estimators
and applies multiple testing procedures to identify nonzero elements of the precision matrix (GeneNet, \cite{Schafer2005}).
The ML-based approach directly estimates the precision matrix
by maximizing the penalized likelihood function \citep{Yuan2007, Friedman2008, Witten2011, Mazumder2012}. 
The regression approach considers the linear regression model
and uses the fact that the nonzero regression coefficients correspond
to the nonzero off-diagonal elements of the precision matrix \citep{MB2006, Peng2009, Yuan2010, Sun2013, Ren2015, Khare2015, Ali2017}.
The constrained $\ell_1$-minimization (CLIME) approach \citep{Cai2011b} obtains the sparse precision matrix by solving the linear programming problem with the constraints
on the proximity to the precision matrix where the objective 
function is the sum of absolute values of the design variables.
The adaptive CLIME \citep{Cai2016} improves the CLIME and attains
the optimal rate of convergence.

Among the existing methods, 
sparse precision matrix estimation
via the scaled Lasso (SPMESL) proposed by \cite{Sun2013} is a
tuning-free procedure. 
Conversely, other
existing methods require searching the optimal tuning parameter, 
 GeneNet \citep{Schafer2005} and the neighborhood selection \citep{MB2006}
require choosing the level of the false discovery rate; 
other penalized methods
require choosing the optimal penalty level.
In addition,  SPMESL supports the consistency of the precision matrix
estimation under the sparsity and invertibility conditions that
are weaker than the irrepresentable condition \citep{VB2009, Sun2013}.
However, it has not been  
widely used for the sparse precision matrix
estimation due to the inefficiency of the implemented algorithm of the SPMESL using
the least angle regression (LARS) algorithm \citep{Efron2004}.
Even though the LARS algorithm efficiently provides a whole
solution path of the Lasso problem \citep{Tibshirani1996},
its computational cost is expensive and 
the SPMESL needs to solve $p$ independent Lasso problems
as the subproblems of the SPMESL.
Thus, the scalability of the SPMESL is still challenging.

In this paper, we found the possibility to improve the computational efficiency 
of the SPMESL with the empirical observation that the SPMESL does not need
a whole solution path of the Lasso problem based on the tuning-free
characteristic of the SPMESL (see details in Section 3). 
Motivated by this empirical observation, we propose a more efficient algorithm based
on the coordinate descent (CD) algorithm and the warm start strategy
for the scaled Lasso and the SPMESL
as applied to the standard lasso problem in \cite{Wu2008}.
Moreover, we develop the \emph{row-wise updating parallel CD
algorithm using graphics processing units} (GPUs) adequate for
the SPMESL.  
To efficiently implement the proposed parallel CD algorithm,
  we consider \emph{the active response matrix} that consists
  of columns of active response variables that corresponds to
the error variance estimate not converged at the current iteration.
We will show the efficiency of the proposed algorithms using
comprehensive numerical studies.
In this paper, we also provide the numerical comparisons of the estimation performance of
the SPMESL with three different penalty levels, which are theoretically suggested in previous literature \citep{Sun2012, Sun2013} but there is no comparison result
of them in terms of the estimation performance.

The remainder of the paper is organized as follows.
Section 2 introduces the SPMESL and its original algorithm.
We explain the proposed CD algorithm
and GPU-parallel CD algorithm in Section 3.
Section 4 provides a comprehensive numerical study, including the comparisons of computation times and estimation performances
with other existing methods.
We provide a summary of the paper and discussion
in Section 5.

\section{Sparse Precision Matrix Estimation via Scaled Lasso}

In this section, we briefly introduce the scaled Lasso,
the SPMESL, and their algorithms.

\subsection{Scaled Lasso}

The scaled Lasso proposed by \cite{Sun2012} is a variant of a penalized regression model with the Lasso penalty.
To be specific, let $\by \in \RR^n$, $\bX \in \RR^{n\times p}$,
$\bs{\beta} \in \RR^p$, 
and $\sigma >0$. Consider a linear regression model
$\by = \bX \bs{\beta} + \epsilon$, where $\epsilon \sim N(0, \sigma^2 {\bf I}_n)$ and ${\bf I}_n$  is the $n$-dimensional identity matrix.
The scaled Lasso considers the minimization of the following objective function $L_{\lambda_0}(\bs{\beta}, \sigma)$: 
\begin{equation} \label{eqn:sc}
L_{\lambda_0} (\bs{\beta},\sigma)
= \frac{\|\by - \bX \bs{\beta}\|_2^2}{2n\sigma} + \frac{\sigma}{2}  + \lambda_0 \|\bs{\beta}\|_1,
\end{equation} 
where $\lambda_0$ is a given tuning parameter and $\|\bx\|_1 = \sum_{j=1}^p |x_j|$ for $\bx \in \RR^p$.
Thus, the scaled Lasso simultaneously obtains the estimate
$\hat{\bs{\beta}}$ of the regression coefficient $\bs{\beta}$ and
the estimate $\hat{\sigma}$ of the error standard deviation $\sigma$.
It is proved that the objective function $L_{\lambda_0}$
is jointly convex in $(\bs{\beta},\sigma)$ and is strictly
convex for $\sigma$ in \cite{Sun2012}.
From the convexity of the objective function,
the solution $(\hat{\bs{\beta}}, \hat{\sigma})$ of
the scaled Lasso problem can be obtained by
the block CD algorithm as follows:

\begin{itemize}

\item[] (Step 1) With a fixed $\hsigma$, consider the minimization of 
$\hsigma L_{\lambda_0}(\bs{\beta}, \hsigma)$: 
\begin{equation} \nonumber
\hsigma L_{\lambda_0}(\bs{\beta}, \hsigma) = \frac{\|\by -\bX \bs{\beta}\|_2^2}{2n}
+ \hsigma\lambda_0 \|\bs{\beta}\|_1 + \frac{\hsigma^2}{2}.
\end{equation}
The minimizer $\hbetabs$ of $\hsigma L_{\lambda_0}(\bs{\beta},\hsigma)$ can be obtained by solving
the following standard lasso problem with $\lambda=\hsigma \lambda_0$:
\begin{equation} \label{eq:lasso}
\min_{\bs{\beta}} \frac{\|\by -\bX \bs{\beta}\|_2^2}{2n}
+ \lambda \|\bs{\beta}\|_1.
\end{equation}

\item[] (Step 2) With a fixed $\hbetabs$, the minimizer
$\hsigma$ of $L_{\lambda_0}(\hbetabs,\sigma)$ is easily obtained by
\begin{equation} \nonumber
\hsigma = \frac{\|\by - \bX \hbetabs\|_2}{\sqrt{n}}.
\end{equation}

\item[] (Step 3) Repeat Steps 1 and 2 until convergence occurs.

\end{itemize}

\noindent In the original paper of the scaled Lasso,
the standard lasso problem is solved by the LARS algorithm
\citep{Efron2004}, which provides a whole solution path
of the standard Lasso problem. 
During the block CD algorithm, 
 the minimizer
$\hbetabs(\hsigma^{(r)})$
of $\hsigma^{(r)} L_{\lambda_0}(\bs{\beta},\hsigma^{(r)})$
in Step 1 at the $r$-th iteration is
obtained from the solution path of the
standard Lasso problem with $\lambda=\hsigma^{(r)}\lambda_0$.

In addition to the joint estimation of $\bs{\beta}$ and $\sigma$,
the scaled Lasso has attractive features as described in \cite{Sun2012}.
First, the scaled Lasso guarantees the consistency
of $\hbetabs$ and $\hsigma$ under two conditions: the penalty level condition
$\lambda_0 > A \sqrt{2n^{-1}\log p}$
for $A > 1$ and the compatibility condition,
which implies the oracle inequalities for the prediction and
estimation \citep{VB2009}.
Second, the scaled Lasso estimates are scale-equivariant in
$\by$ in the sense that $\hbetabs(\bX,\alpha \by) = \alpha \hbetabs(\bX,\by)$ and $\hsigma(\bX, \alpha \by) = |\alpha| \hsigma(\bX,\by)$.
Finally, the authors suggest using the universal penalty
level $\lambda_{0}^{U} = \sqrt{2n^{-1}\log p}$ for $\lambda_0$
based on their numerical and real-data examples.
We can consider the scaled Lasso with
the universal penalty level as a tuning-free procedure.
Note that the universal penalty level does not satisfy
the theoretical requirement $\lambda_0 > A\sqrt{2n^{-1}\log p}$
for the consistency of $\hsigma$.
The authors of the scaled Lasso provide
several conditions that can weaken the required condition
for $\lambda_0$, in order to
justify using a penalty level smaller than $A\sqrt{2n^{-1}\log p}$
for $A>1$ \citep{Sun2012}.

\subsection{Sparse precision matrix estimation via scaled Lasso}

Let $\bs{\Sigma}=(\sigma_{jk})_{1\le j,k\le p}$ and $\bs{\Omega}=\bs{\Sigma}^{-1}=(\omega_{jk})_{1\le j,k\le p}$  be a covariance
matrix and a corresponding precision matrix, respectively.
Suppose that $\bx^{(i)} = (X_{i1},\ldots, X_{ip})$  for
$i=1,2,\ldots,n$ are independently drawn from the multivariate normal distribution with mean ${\bf 0}$ and covariance matrix $\bs{\Sigma}$. 
Let $\bX \in \RR^{n\times p}$ be a data matrix and
 $\bx_k = (X_{1k},\ldots,X_{nk})^T$ be the $k$th column of $\bX$.
Consider following linear regression models for $k=1,\ldots,p$:
\begin{equation} \label{eq:reg}
X_{ik} = \sum_{l\neq k} \beta_{lk} X_{il} + \epsilon_{ik},~
\end{equation}
where $\epsilon_{ik}$s are independent and
identically distributed random variables drawn from
the normal distribution with mean 0 and variance $\sigma_k^2$.
As applied in the regression approach for the sparse
precision matrix estimation, the elements of the precision matrix
can be represented into the regression coefficient
and the error variance by using the following relationships:
\begin{equation} \label{eq:relation}
\omega_{jk} = -\frac{\beta_{jk}}{\sigma_k^2},
~ \omega_{kk} = \frac{1}{\sigma_k^2}
~\mbox{ for } 1\le j\neq k \le p.
\end{equation}
As the scaled Lasso
simultaneously estimates the regression coefficients $(\beta_{jk})$
and the error standard deviation $\sigma_k$ as described in Section 2.1, we can use the scaled Lasso to estimate the precision matrix.
Thus, the SPMESL considers solving the scaled Lasso problems column-wise
and defines the SPMESL estimator that combines the estimates
from the $p$ scaled Lasso problems.

To be specific, let  $\textbf{B} = (b_{jk})_{1\le j,k \le p}$  be
a matrix of the regression coefficients such that
$b_{jk} = \beta_{jk}$ for $j\neq k$ and
$b_{jj} = -1$ for $j=1,\ldots,p$.
Denote ${\bf b}^{(j)} = (b_{j1},
\ldots, b_{jp})$
and ${\bf b}_{k}=(b_{1k},\ldots,b_{pk})^T$ as the $j$th row and the $k$th column
of a matrix ${\bf B}$, respectively. 
We further let ${\bf S} = (s_{jk})$ be the sample covariance matrix.
Subsequently, the precision matrix can be represented with ${\bf B}$ and ${\bf D}={\rm diag}(\sigma_1^{-2}, \ldots, \sigma_p^{-2})$  as follows: 
\[
\bs{\Omega} = -{\bf B}{\bf D} = (-\sigma_1^{-2}{\bf b}_{1},\ldots,-\sigma_p^{-2}{\bf b}_{p}).
\] 
To obtain the estimate of  $\bs{\Omega}$, the SPMESL solves the following $p$ independent scaled Lasso problems first: for $k=1,\ldots,p$,
\begin{equation} \label{eq:spmesl}
(\hat{\bf b}_{k}, \hsigma_k) =
\argmin_{\bs{\beta}_k\in \RR^p:\beta_{kk}=-1, \sigma_k > 0}
~ \frac{\|\bX_k - \sum_{j\neq k} \beta_{jk} \bX_j\|_2^2}{2n\sigma_k}
+ \frac{\sigma_k}{2} + \lambda_0 \sum_{j\neq k} |\beta_{jk}|.
\end{equation}
Note that the SPMESL in \cite{Sun2013} originally considers 
$\lambda_0 \sum_{j\neq k} \sqrt{s_{jj}}|\beta_{jk}|$ 
instead of $\lambda_0 \sum_{j\neq k} |\beta_{jk}|$
to penalize the coefficients on the same scale.
In this paper, we assume that the columns of the data matrix $\bX$
are centered and scaled to $\bX_k^T \bX_k = n$ for $k=1,\ldots,p$.
Thus, $s_{jj} = 1$ for $j=1,\ldots,p$. 
This assumption does not affect the estimation performance of the precision matrix as the scaled Lasso has the scale-equivariant property in the response variable as explained
in the previous section. We can easily recover
the estimate $\hat{\bs{\Omega}}^o = (\hat{\omega}_{jk}^o)_{1\le j,k\le p}$ from the data in the original scale
by the following Proposition \ref{scale}:
\begin{proposition} \label{scale}
Let $\bX \in \RR^{n \times p}$ and
$\tilde{\bX} = \bX {\bf C}$ be a
data matrix and the scaled data matrix with 
${\bf C}={\rm diag}(s_{11}^{-1/2},\ldots,s_{pp}^{-1/2})$,  where  $s_{jj} > 0$ for $j=1,\ldots,p$. 
Denote $\hat{\bs{\Omega}}^o$ as
the estimate of the precision matrix by applying
the scaled Lasso in \eqref{eq:spmesl} with the penalty
term $\lambda_0 \sum_{j\neq k} \sqrt{s_{jj}}|\beta_{jk}|$ column by
column with $\bX$. Similarly, denote $\hat{\bs{\Omega}}^C = (\hat{\omega}_{jk}^C)_{1\le j,k \le p}$ as
the estimate by the scaled Lasso in \eqref{eq:spmesl}
with $\tilde{\bX}$.
Then,  $\hat{\bs{\Omega}}^o = {\bf C}\hat{\bs{\Omega}}^C {\bf C}$. 
\end{proposition}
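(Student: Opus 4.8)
The plan is to prove the identity one column at a time, by exhibiting an explicit invertible change of variables that turns the $k$-th scaled Lasso subproblem \eqref{eq:spmesl} solved on $\tilde{\bX}$ into a positive multiple of the corresponding subproblem solved on $\bX$ (with the scale-adjusted penalty $\lambda_0\sum_{j\neq k}\sqrt{s_{jj}}|\beta_{jk}|$), and then translating the resulting relation between the two sets of minimizers into the claimed relation between precision-matrix entries via \eqref{eq:relation}.

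First I would write $\tilde{\bX}_j = s_{jj}^{-1/2}\bX_j$ for all $j$ and substitute into the objective of the $k$-th subproblem on $\tilde{\bX}$, with coefficient vector $\bs{\gamma}_k$ (feasible set $\gamma_{kk}=-1$) and error standard deviation $\sigma_k$. Pulling the scalar $s_{kk}^{-1/2}$ out of the residual $\tilde{\bX}_k - \sum_{j\neq k}\gamma_{jk}\tilde{\bX}_j$, one sees that with the substitution $\beta_{jk} = s_{jj}^{-1/2}s_{kk}^{1/2}\gamma_{jk}$ (and $\beta_{kk}=\gamma_{kk}=-1$) the squared-residual term equals $s_{kk}^{-1}\|\bX_k-\sum_{j\neq k}\beta_{jk}\bX_j\|_2^2$, while the penalty $\lambda_0\sum_{j\neq k}|\gamma_{jk}|$ becomes $s_{kk}^{-1/2}\lambda_0\sum_{j\neq k}\sqrt{s_{jj}}|\beta_{jk}|$ --- which is exactly why the weight $\sqrt{s_{jj}}$ appears in the original penalty. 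Reparametrizing the error standard deviation by $\tau_k = s_{kk}^{1/2}\sigma_k$ then gives every surviving term the common factor $s_{kk}^{-1/2}$, so the objective on $\tilde{\bX}$ equals $s_{kk}^{-1/2}$ times the objective of the $k$-th scaled Lasso problem on $\bX$ evaluated at $(\bs{\beta}_k,\tau_k)$.

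Since $(\bs{\gamma}_k,\sigma_k)\mapsto(\bs{\beta}_k,\tau_k)$ is a bijection between the two feasible sets and multiplication by the fixed positive constant $s_{kk}^{-1/2}$ does not move minimizers, the minimizers correspond: writing $\hat{\bf b}_k^{o},\hat{\sigma}_k^{o}$ and $\hat{\bf b}_k^{C},\hat{\sigma}_k^{C}$ for the coefficient vectors and error standard deviation estimates underlying $\hat{\bs{\Omega}}^o$ and $\hat{\bs{\Omega}}^C$, we obtain $\hat{b}_{jk}^{o} = s_{jj}^{-1/2}s_{kk}^{1/2}\hat{b}_{jk}^{C}$ for all $j$ (the $j=k$ case being the trivial $-1=-1$) and $\hat{\sigma}_k^{o} = s_{kk}^{1/2}\hat{\sigma}_k^{C}$. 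Strict convexity of $L_{\lambda_0}$ in $\sigma$ (from \cite{Sun2012}) makes $\hat{\sigma}_k$ unique, and the coefficient identity holds for any pair of corresponding coefficient minimizers, which is all that is needed. Substituting into \eqref{eq:relation} then finishes the argument: the $k$-th column of $\hat{\bs{\Omega}}^o$ equals $-(\hat{\sigma}_k^{o})^{-2}\hat{\bf b}_k^{o} = -(s_{kk}(\hat{\sigma}_k^{C})^2)^{-1}s_{kk}^{1/2}{\bf C}\hat{\bf b}_k^{C} = s_{kk}^{-1/2}{\bf C}\big(-(\hat{\sigma}_k^{C})^{-2}\hat{\bf b}_k^{C}\big)$, i.e. $s_{kk}^{-1/2}{\bf C}$ times the $k$-th column of $\hat{\bs{\Omega}}^C$, which is the $k$-th column of ${\bf C}\hat{\bs{\Omega}}^C{\bf C}$; entrywise this reads $\hat{\omega}_{jk}^{o} = s_{jj}^{-1/2}s_{kk}^{-1/2}\hat{\omega}_{jk}^{C}$.

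There is no analytic subtlety here; the only thing to be careful about is keeping track of the three simultaneous rescalings --- the predictor columns $\bX_j$ ($j\neq k$) by $s_{jj}^{-1/2}$, the response column $\bX_k$ by $s_{kk}^{-1/2}$, and the auxiliary rescaling $\tau_k=s_{kk}^{1/2}\sigma_k$ of the error standard deviation --- and checking that the adjusted penalty weight $\sqrt{s_{jj}}$ is precisely what makes the two $\ell_1$ terms coincide after the change of variables. This is the only place where a sign or exponent error is easy to introduce, but it is pure bookkeeping.
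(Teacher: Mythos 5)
Your proof is correct and follows essentially the same route as the paper: a change of variables $\beta_{jk}=s_{jj}^{-1/2}s_{kk}^{1/2}\gamma_{jk}$ in each column's subproblem, a corresponding rescaling of the error standard deviation, and substitution into $\hat{\omega}_{jk}=-\hat{b}_{jk}\hat{\sigma}_k^{-2}$. The only cosmetic difference is that the paper invokes the scale-equivariance of the scaled Lasso in the response (quoted from Sun and Zhang, 2012) to handle the factor $s_{kk}^{-1/2}$ on $\bX_k$, whereas you re-derive it inline by introducing $\tau_k=s_{kk}^{1/2}\sigma_k$ and observing that multiplying the objective by a positive constant does not move minimizers.
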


\begin{proof}
By the definition of $\tilde{\bX}$, the $k$th column of $\tilde{\bX}$
is  $\tilde{\bX}_{k} = \bX_{k}/\sqrt{s_{kk}}$. 
Let $\hat{\bf B}^C = (\hat{b}_{jk}^C)_{1\le j,k \le p}$ and $(\hsigma_{k,C})_{1\le k \le p}$ be the solutions of
the $p$ scaled Lasso problem in \eqref{eq:spmesl} with $\tilde{\bX}$.
We further let 
$\hat{\bf B}^o = (\hat{b}_{jk}^o)_{1\le j,k \le p}$ and  $(\hsigma_{k,o})_{1\le k \le p}$ be the solutions of the following scaled Lasso problems: 
for $k=1,\ldots,p$, 
\begin{equation} \label{eq:org_spm}
(\hat{\bf b}_{k}^o, \hsigma_{k,o}) =
\argmin_{\bs{\beta}_k\in \RR^p:\beta_{kk}=-1, \sigma_k > 0}
~ \frac{\|\bX_k - \sum_{j\neq k} \beta_{jk} \bX_j\|_2^2}{2n\sigma_k}
+ \frac{\sigma_k}{2} + \lambda_0 \sum_{j\neq k} \sqrt{s_{jj}}|\beta_{jk}|.
\end{equation} 
By substituting  $\tilde{\bX}$ with $\bX {\bf C}$   in \eqref{eq:spmesl}
and the reparameterization with $\tilde{\beta}_{jk} = \beta_{jk}/\sqrt{s_{jj}}$ for $j\neq k$,
the $k$th scaled Lasso problem becomes 
\begin{equation} \label{eq:spmesl_sub} 
(\hat{\bf b}_{k}^C, \hsigma_{k,C}) =
\argmin_{\tilde{\bs{\beta}}_k\in \RR^p:\tilde{\beta}_{kk}=-1, \sigma_k > 0}
~ \frac{\|\bX_k/\sqrt{s_{kk}} - \sum_{j\neq k} \tilde{\beta}_{jk}\bX_j\|_2^2}{2n\sigma_k}
+ \frac{\sigma_k}{2} + \lambda_0 \sum_{j\neq k} \sqrt{s_{jj}}|\tilde{\beta}_{jk}|.
\end{equation}  
From the forms of the problems \eqref{eq:org_spm} and
\eqref{eq:spmesl_sub},
the relationship  $\hat{\bf b}^o_{k} = \sqrt{s_{kk}}\hat{\bf b}^C_{k}$ and $\hsigma_{k,o} = \sqrt{s_{kk}}\hsigma_{k,C}$  
hold by the scale equivariant property of the scaled Lasso.
In addition, by the reparameterization, 
$\hat{b}^o_{jk} = \sqrt{s_{kk}/s_{jj}} \hat{b}^C_{jk}$
for $1\le j, k \le p$. 
Combining the above relationships,
the $(j,k)$th element $\hat{\omega}_{jk}^o$ of the precision matrix estimate $\hat{\bs{\Omega}}^o$ is represented as 
\[
\hat{\omega}_{jk}^o = -\hat{b}_{jk}^o \hsigma_{k,o}^{-2}
= -(s_{jj}s_{kk})^{-1/2} \hat{b}_{jk}^{C} \hsigma_{k,C}^{-2}
= (s_{jj}s_{kk})^{-1/2} \hat{\omega}_{jk}^C.
\]  
Hence, $\hat{\bs{\Omega}}^o = {\bf C} \hat{\bs{\Omega}}^C {\bf C}$.
\end{proof}

Note that the result  $\hat{\bs{\Omega}}^o = {\bf C} \hat{\bs{\Omega}}^C {\bf C}$  in
Proposition \ref{scale} is
consistent with the property of  $({\rm Var}({\bf A} {\bf z}))^{-1} = {\bf A}^{-T} {\rm Var}({\bf z})^{-1} {\bf A}^{-1}$
for a $p$-dimensional random vector ${\bf z}$ and a nonsingular matrix
${\bf A}\in \RR^{p \times p}$.

After solving $p$ independent scaled Lasso problems,
the estimate  $\hat{\bs{\Omega}}_1 = - \hat{\bf B} \hat{\bf D} = (\hat{\omega}_{jk,1})_{1\le j,k \le p}$  of the precision matrix is
obtained. However, the estimate  $\hat{\bs{\Omega}}_1$  is not
symmetric in general. To find the symmetric estimate
of the precision matrix using the current estimate  $\hat{\bs{\Omega}}_1$, 
the SPMESL considers solving the following linear programming problem
as in \cite{Yuan2010}: 
\begin{equation}
\hat{\bs{\Omega}} = \argmin_{{\bf M}:{\bf M}^T = {\bf M}} \|{\bf M}- \hat{\bs{\Omega}}_1\|_1.
\end{equation} 
Remark that the authors of the SPMESL consider the above linear
programming problem for the symmetrization step in \cite{Sun2013},
but they applied the following symmetrization step in the
implemented R package \texttt{scalreg}:
\begin{equation} \label{eq:symm}
\hat{\omega}_{jk}=\hat{\omega}_{kj}
= \hat{\omega}_{jk,1}
I(|\hat{\omega}_{jk,1}|\leq|\hat{\omega}_{kj,1}|) +
\hat{\omega}_{kj,1}
I(|\hat{\omega}_{jk,1}| > |\hat{\omega}_{kj,1}|),
\end{equation}
which is applied in the CLIME 
and the theoretical properties are developed on this symmetrization
\citep{Cai2011b}.
In addition, for the high-dimensional data, the symmetrization applied
in the CLIME is favorable as
its computational cost is cheap and it is easily
parallelizable. For these reasons,
we apply
the symmetrization step \eqref{eq:symm} in the proposed algorithm.

As stated in the previous section, the scaled Lasso
guarantees the consistency of the regression coefficients
and the error variance under the compatibility condition.
Thus, the SPMESL also guarantees column-wise consistency
of  $\hat{\bs{\Omega}}_1$  under the compatibility conditions,
which is independently defined in each column of  $\hat{\bs{\Omega}}_1$, 
as the SPMESL applies the scaled Lasso column wise.
To derive the overall consistency of  $\hat{\bs{\Omega}}$, 
the authors of the SPMESL considers
the capped $\ell_1$ sparsity and the invertibility conditions as follows.
\begin{enumerate}
\item[(i)] Capped $\ell_1$ sparsity condition:
For a certain $\epsilon_0$, $\lambda_0^*$
not depending on $j$ and an index set $T_j \subset \{1,2,\ldots,p\} \setminus \{j\}$, the capped $\ell_1$ sparsity of the $j$th column with $t_{j}>0$ is
defined as
\[
|T_j| + \sum_{k\neq j, k \notin S_j} \frac{|\omega_{kj}|\sqrt{\sigma_{kk}}}{(1-\eps_0) \sqrt{\omega_{jj}}\lambda_{0}^*} \le a_j.
\]

\item[(ii)] Invertibility condition:
Let ${\bf W}={\rm diag}(\sigma_{11},\ldots, \sigma_{pp})$ and ${\bf R} = {\bf W}^{-1/2}\bs{\Sigma} {\bf W}^{-1/2}$.
Further, let $T_j \subseteq Q_j \subseteq \{1,2,\ldots,p\}\setminus \{j\}$. The invertibility condition
is defined as 
\[
\inf_j \left\{ \frac{ {\bf u}^T {\bf R}_{-j,-j} {\bf u}}{\| {\bf u}_{Q_j}\|_2^2} ~:~ {\bf u}\in \RR^p, {\bf u}_{Q_j} \neq 0, 1 \le j \le p\right\} \ge c_*
\] 
with a fixed constant $c_*>0$.
Note that the invertibility condition holds if the spectral norm of
${\bf R}^{-1}= {\bf D}^{1/2} \bs{\Omega} {\bf D}^{1/2}$ is bounded (i.e., $\|{\bf R}^{-1}\|_2 \le
c_*^{-1}$). 
\end{enumerate}

With some modifications on the capped $\ell_1$ sparsity and the invertibility conditions,
the authors of the SPMESL derive several conditions on $\lambda_{0}^*$ that
guarantees the estimation consistency of the precision matrix under the spectral norm.
Among them, for practical usage, we consider two conditions on a penalty level $\lambda_0 \ge \lambda_0^*$
as follows:
\begin{itemize}
\item Union bound for $p$ applications of the scaled Lasso (Theorem 2 in \cite{Sun2013}):
\begin{equation}
\lambda_0 = A \sqrt{4n^{-1} \log p} \mbox{ for } A > 1.
\end{equation}

\item Probabilistic error bound (Theorem 13 in \cite{Sun2013}):
\begin{equation}
\lambda_0 = A L_n (k/p) \mbox{ for } 1 < A \le \sqrt{2},
\end{equation}
where $k$ is a real solution of $k=L_1^4(k/p) + 2 L_1^2(k/p)$,
$L_n(t) = n^{-1/2} \Phi^{-1}(1-t)$, and $\Phi^{-1}(t)$ is the standard normal quantile function.
\end{itemize}

\noindent Note that a real solution of the equation $k-L_1^4(k/p) + 2 L_1^2(k/p)=0$ can easily be found
by applying the bisection method. For instance, we demonstrate two real solutions
for $p=100, 1000$ in Figure \ref{fig:pb} with the values of $k-L_1^4(k/p) + 2 L_1^2(k/p)$.
For $p=1000$ and $n=100$, $\lambda_{ub} = \sqrt{4n^{-1} \log p} \approx 0.5257$ and
$\lambda_{pb}=\sqrt{2} L_n (k/p) \approx 0.2810$ with $k=23.4748$ while
$\lambda_{univ} = \sqrt{2n^{-1} \log (p-1)} \approx 0.3717$,
where $\lambda_{ub}$ is the penalty level derived by the union bound,
$\lambda_{pb}$ is the penalty level derived by the probabilistic error bound,
and $\lambda_{univ}$ is the universal penalty level used in the scaled lasso.
In the paper of \cite{Sun2013}, the penalty level derived by the probabilistic error bound is suggested
for the SPMESL. However, there are no comparison results for the three penalty levels $\lambda_{univ}$, $\lambda_{ub}$, and $\lambda_{pb}$.
We conduct a comparison of performances for identifying the nonzero elements of $\bs{\Omega}$
by the three penalty levels above in Section 4 to provide a guideline for the penalty level $\lambda_0$.

\begin{figure}[!htb]
\centering
\begin{minipage}[r]{.45\linewidth}
\centerline{\epsfig{file=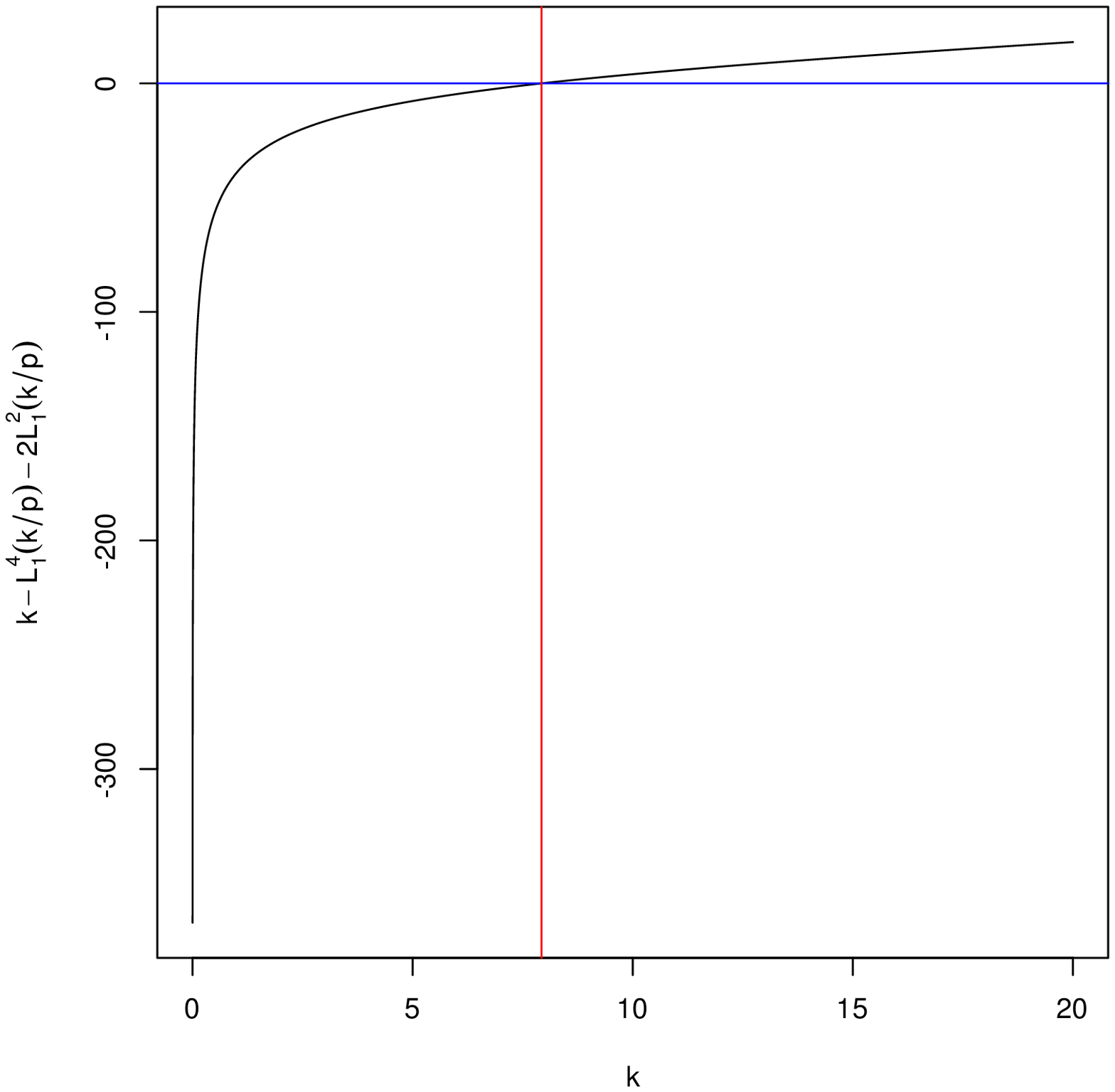,width=0.8\textwidth,height=0.24\textheight}}
\centerline{(1) $p=100$}
\end{minipage}
\begin{minipage}[l]{.45\linewidth}
\centerline{\epsfig{file=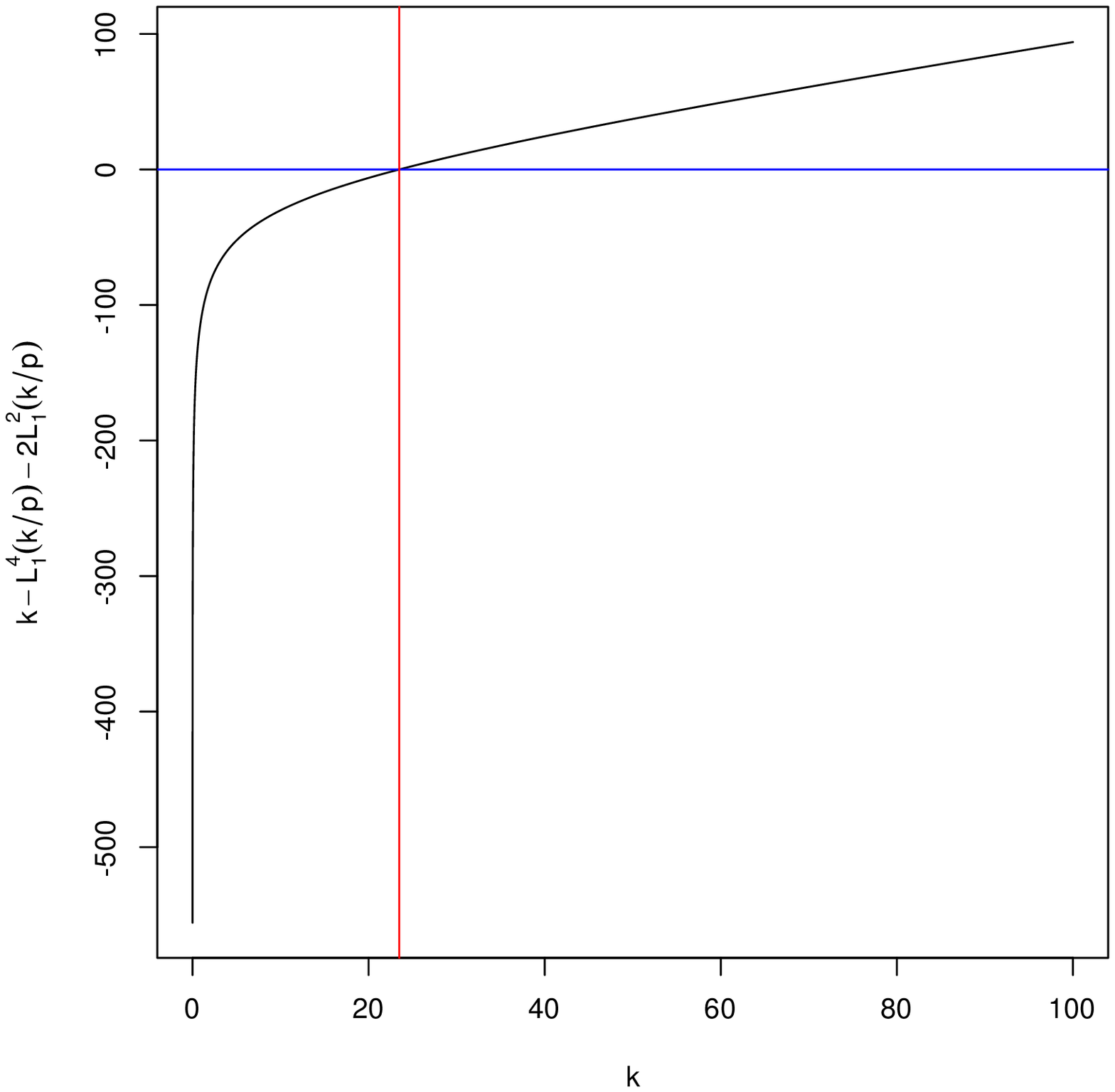,width=0.8\textwidth,height=0.24\textheight}}
\centerline{(2) $p=1000$}
\end{minipage}
\caption{Plots of $k-L_1^4(k/p) + 2 L_1^2(k/p)$ for $p=100, 1000$. Vertical red lines
denote the solutions of $k-L_1^4(k/p) + 2 L_1^2(k/p)=0$ obtained by the bisection method.} \label{fig:pb}
\end{figure}

\section{Efficient Coordinate Descent Algorithm for SPMESL and its GPU-parallelization}

The original algorithm for the scaled Lasso and the SPMESL
adopt the LARS algorithm, which provides
the  whole  solution path of the Lasso regression problem,
and its implemented R package \texttt{scalreg} is available in
the Comprehensive R Archive Network (CRAN) repository. 
As mentioned in the Introduction, 
we empirically observed that the block CD
algorithms for the scaled Lasso and the SPMESL do not need
a whole solution path of the standard Lasso problem in their sub-problems,
where the sub-problem denotes the minimization problem in Step 1 of the scaled Lasso problem. 
To describe our empirical observation, we consider an example with a linear regression model
$y_i = {\bf x}_i^T \bs{\beta} + \epsilon_i$ and $\epsilon_i \sim N(0,\sigma^2)$
for $i=1,2,\ldots,250$, where the true parameter
$\bs{\beta} = (\bs{\beta}_2^T, \bs{\beta}_{-1}^T, \bs{\beta}_0^T)^T \in \mathbb{R}^{500}$,
$\bs{\beta}_2 = (2,\ldots,2)^T \in \mathbb{R}^{5}$,
$\bs{\beta}_{-1} = (-1,\ldots,-1)^T \in \mathbb{R}^{5}$,
$\bs{\beta}_0=(0,\ldots,0)^T \in \mathbb{R}^{490}$,
and $\sigma = 3, 5$.
We set the initial value of $(\bs{\beta}, \sigma)$
as $({\bf 0}_{500\times 1}, 1)$.
As shown in Figure \ref{fig:conv_sig}, the numbers of iterations for the convergence of $\hat{\sigma}$ 
are less than 10
when the true parameter $\sigma = 3, 5$ and $\lambda_0=\sqrt{2n^{-1}\log p}$.
This implies that we do not need to obtain the whole
solution paths of $p$ lasso problems with respect to all $\lambda$ values.

\begin{figure}[!htb]
\centering
\begin{minipage}[r]{.45\linewidth}
\centerline{\epsfig{file=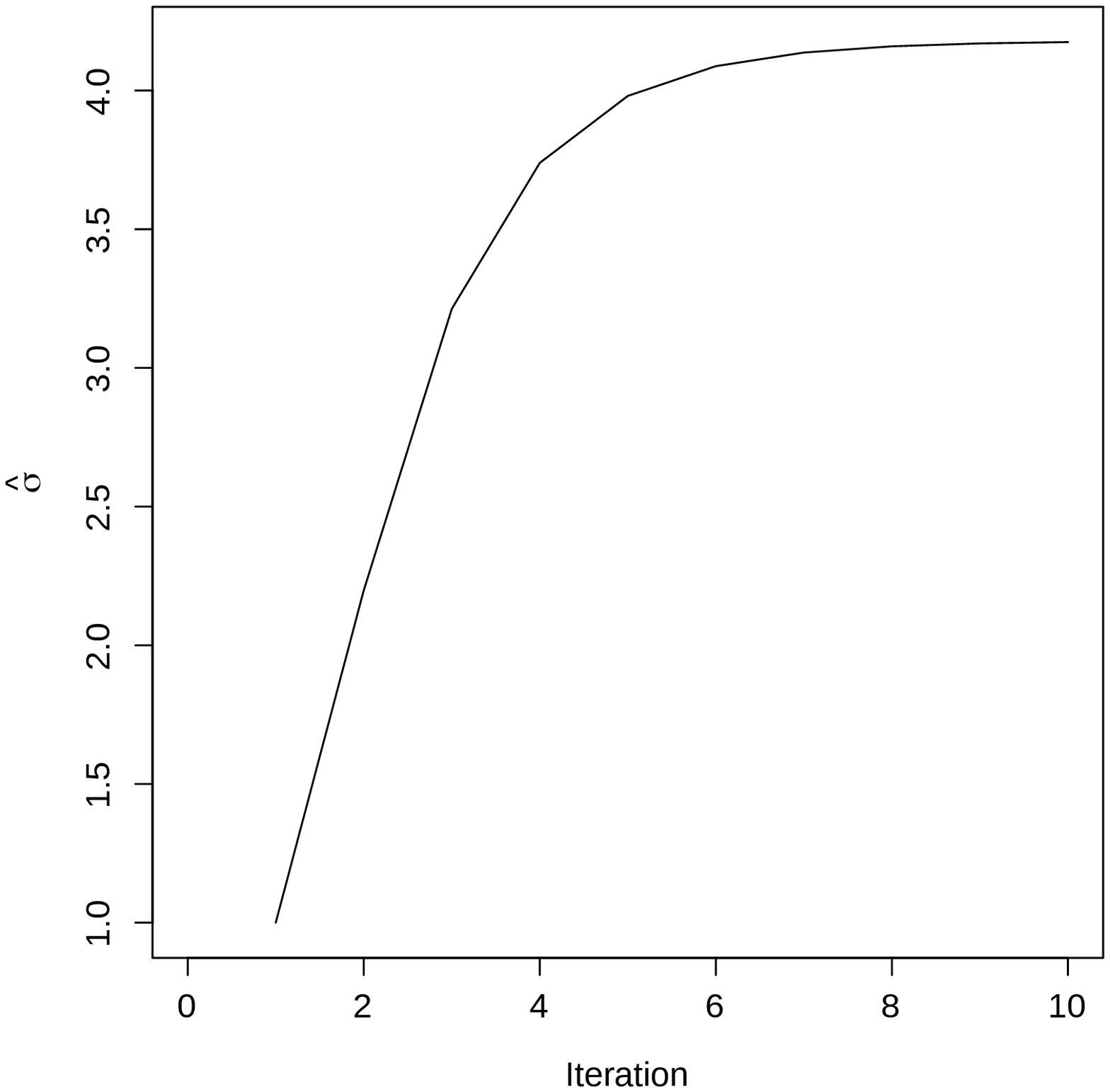,width=0.95\textwidth,height=0.24\textheight}}
\centerline{(1) $p=500, n=250, \sigma=3$}
\end{minipage}
\begin{minipage}[l]{.45\linewidth}
\centerline{\epsfig{file=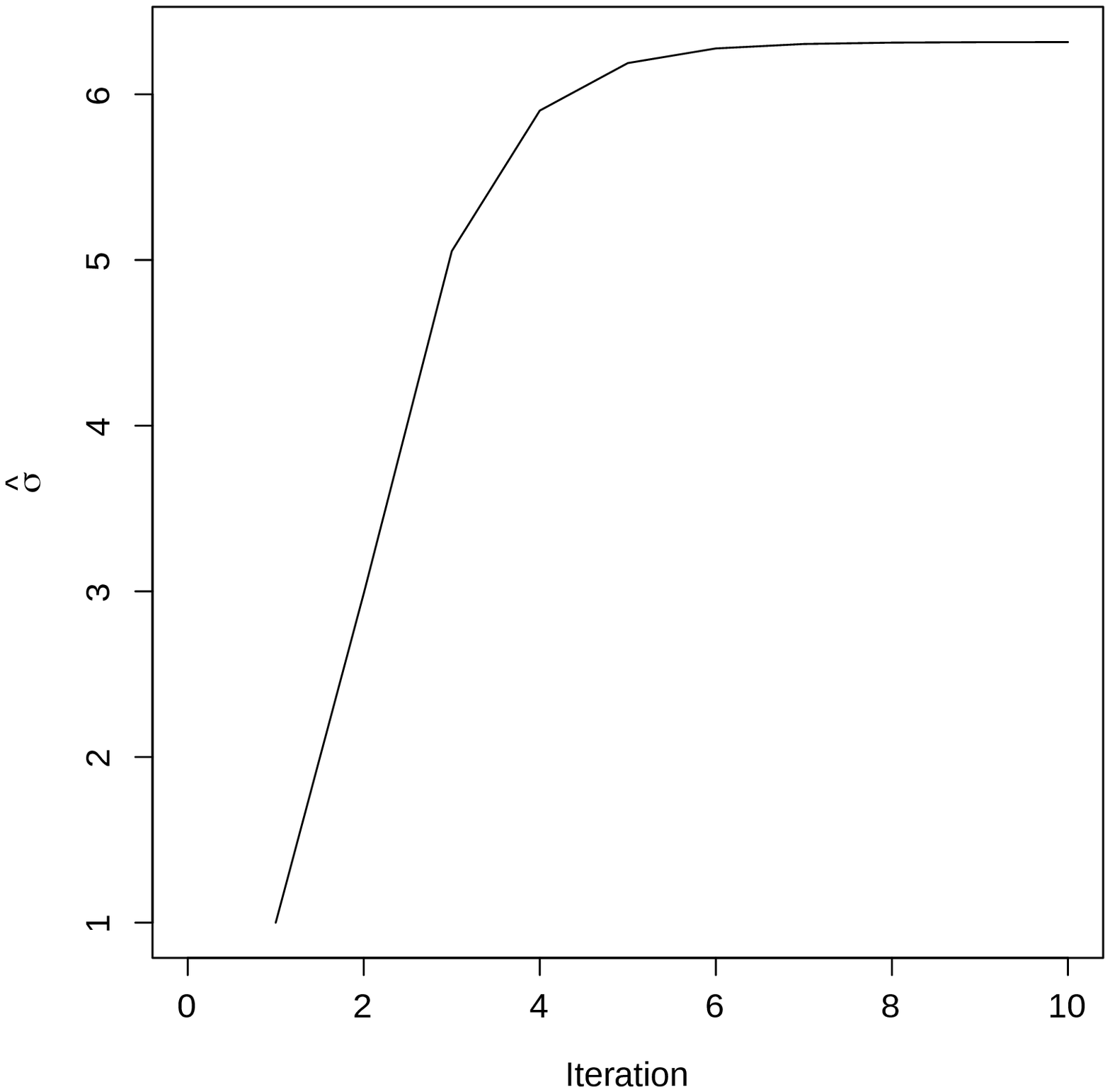,width=0.95\textwidth,height=0.24\textheight}}
\centerline{(2) $p=500, n=250, \sigma=5$}
\end{minipage}
\caption{Plots of the estimates of $\hat{\sigma}$ for $\sigma=3, 5$ along with the number of iterations.} \label{fig:conv_sig}
\end{figure}

Thus, the calculation of the whole solution path by the LARS algorithm
is inefficient for the scaled Lasso and the SPMESL.
In addition, the scaled Lasso and the SPMESL iteratively solve
the lasso problem with the penalty $\lambda_r = \hat{\sigma}^{(r)} \lambda_0$
in their inner iteration, where $\lambda_r$ denotes the penalty level at the $r$th iteration. As $\hat{\sigma}^{(r)}$ converges
to the minimizer of \eqref{eqn:sc}, the difference between
$\lambda_r$ and $\lambda_{r-1}$ decreases as the iteration proceeds.
This denotes that the Lasso estimate in the current iteration
is not far from that in the next iteration.
From this observation, the warm start strategy, which denotes
that the solution in the previous iteration is used as an initial value of the next iteration, is favorable and can efficiently accelerate
the algorithm for the Lasso regression problem in the inner iteration.

To fully utilize the warm start strategy, we consider the coordinate descent (CD) algorithm
for the Lasso regression problem in the inner iteration.
This is because it is known that the CD algorithm with the warm start strategy is
efficient for the Lasso
regression problem and has an advantage for memory consumption
\citep{Wu2008}.
In addition, the SPMESL needs solving $p$ independent scaled
Lasso problems to obtain the estimate of the precision matrix.
We develop the GPU-parallel CD algorithm for the SPMESL, which
updates $p$ coordinates simultaneously with GPUs.
In the following subsections, we introduce the CD algorithm for the
subproblem of the scaled Lasso and GPU-parallel CD algorithm for the SPMESL in detail.

\subsection{CD Algorithm for subproblem of the scaled Lasso}

In this subsection, we focus on the following subproblem of the scaled Lasso
with a given $\lambda_0$: 
\begin{equation} \label{eqn:lasso}
\hat{\bs{\beta}}^{(r)} = \argmin_{\bs{\beta}} \frac{1}{2n}\|\by - \bX \bs{\beta}\|_2^2 + \lambda^{(r-1)} \|\bs{\beta}\|_1,
\end{equation} 
where $\by \in \RR^n$ is a response vector, $\bX \in \RR^{n \times p}$
is a design matrix, $\lambda^{(r-1)} = \hat{\sigma}^{(r-1)} \lambda_0$,
and $\hat{\sigma}^{(r-1)} = \|\by - \bX\hat{\beta}^{(r-1)}\|_2/ \sqrt{n}$ is the iterative solution for $\sigma$ at the $(r-1)$th iteration. 
For
the notational simplicity, we use $\hat{\bs{\beta}}$ to denote the $r$th
iterative solution  $\hat{\bs{\beta}}^{(r)}$, and $\hat{\bs{\beta}}^{[cur]}
$ and $\hat{\bs{\beta}}^{[next]}$  denote the current
and the next iterative solution in the CD algorithm, respectively.
To apply the warm start strategy, we set the initial value  $\hat{\bs{\beta}}^{[cur]}$ for  $\hat{\bs{\beta}}$ to $\hat{\bs{\beta}}^{(r-1)}$. 
In this paper, we consider the cyclic CD algorithm
with an ascending order (i.e., coordinate-wise update from the smallest index to the largest index).
Subsequently, for $j=1,\ldots, p$, the CD algorithm updates $\hat{\beta}_j^{[next]}$
by following equations: 
\[
{\bf e}_j = \by - \sum_{l<j} \bX_l \hat{{\beta}}_l^{[next]}
- \sum_{l>j} \bX_l \hat{{\beta}}_l^{[cur]}, \quad
a_j = \bx_j^T {\bf e}_j / n + \hat{\beta}_j^{[cur]}, \quad
\hat{\beta}_j^{[next]} = \mbox{Soft}_\lambda(a_j),
\]
where $\mbox{Soft}_\lambda(a_j) = \mbox{sign}(a_j)(|a_j|-\lambda)_+$ is the soft-thresholding operator
and $(x)_+ = \max(x, 0)$.  
The CD algorithm repeats the cyclic updates until the convergence occurs,
where the common convergence criterion is the $L_\infty$-norm of
the difference between  $\hat{\bs{\beta}}^{[cur]}$ and $\hat{\bs{\beta}}^{[next]}$ 
(i.e., $\|\hat{\bs{\beta}}^{[next]} - \hat{\bs{\beta}}^{[cur]}\|_\infty$). 
The whole CD algorithm with warm start strategy for the scaled Lasso is summarized in Algorithm \ref{alg:cd}.

\begin{algorithm}[h]
\caption{CD algorithm with warm start strategy for the scaled Lasso}\label{alg:cd}
\begin{algorithmic}[1]
\Require ${\bf y}$, ${\bf X}$, $\lambda_0$, $\hat{\sigma}^{(0)}=1$,
$\hat{\bs{\beta}}^{(0)}={\bf 0}$,
convergence tolerance $\delta$. 
\Repeat{~~$r=0, 1, 2 \ldots$}  \smallskip 
\State $\lambda \gets \hat{\sigma}^{(r)}\lambda_0$  \Comment{Initialization of lasso subproblem}
 \smallskip 
\State $\hat{\bs{\beta}}^{[cur]} \gets \hat{\bs{\beta}}^{(r)}, \hat{\bs{\beta}}^{[next]} \gets \hat{\bs{\beta}}^{[cur]}$ 
\Comment{Warm start strategy}
 \smallskip 
\Repeat{~~$m = 0,1,2,\ldots$} 
 \smallskip
\State $\hat{\bs{\beta}}^{[cur]} \gets \hat{\bs{\beta}}^{[next]}$
 \smallskip
\For{$j = 1,\cdots, p$}
 \smallskip
\State ${\bf e}_j = \by - \sum_{l<j} \bX_l \hat{\bs{\beta}}^{[next]}
- \sum_{l>j} \bX_l \hat{\bs{\beta}}^{[cur]}$  \smallskip
\State $a_j = \bX_j^T {\bf e}_j / n + \hat{\beta}_j^{[cur]}$  \smallskip 
\State $\hat{\beta}_j^{[next]} = \mbox{Soft}_\lambda(a_j)$
 \smallskip
\EndFor
 \smallskip 
\Until{ $\|\hat{\bs{\beta}}^{[next]} - \hat{\bs{\beta}}^{[cur]}\|_\infty, < \delta$}  
\Comment{End of lasso subproblem}  \smallskip 
\State $\hat{\bs{\beta}}^{(r+1)} \gets \hat{\bs{\beta}}^{[next]}$ 
 \smallskip
\State $\displaystyle \hat{\sigma}^{(r+1)} =\frac{\|\by - \bX \hbetabs^{(r+1)}\|_2}{\sqrt{n}}$  \smallskip
\Until{ $|\hat{\sigma}^{(r+1)}-\hat{\sigma}^{(r)}| < \delta$} 
 \smallskip
\Ensure $\hat{\bs{\beta}} \gets \hat{\bs{\beta}}^{(r+1)}$, $\hat{\sigma} \gets \hat{\sigma}^{(r+1)}$ 
\end{algorithmic}
\end{algorithm}

\subsection{CD Algorithm for SPMESL}

As described in Section 2.2, the SPMESL estimates
the precision matrix by solving the $p$ scaled Lasso problems,
where each column of the observed data matrix is considered as the response variable and the other columns are considered as the exploratory variables. 
To be specific, let $(\bx^{(i)})^T = (X_{i1}, X_{i2}, \ldots, X_{ip})^T \sim N(0, \bs{\Omega}^{-1})$
be the $i$th random sample and $\bs{\Omega} = \bs{\Sigma}^{-1}$ be the precision matrix. 
Furthermore, let $\bx_k = (X_{1k},\ldots, X_{nk})^T$ be the $k$th column vector of the observed data matrix $\bX = (\bx_1, \ldots, \bx_p) \in \RR^{n\times p}$. 
The CD algorithm with the warm start strategy
for the SPMESL independently applies the CD algorithm in Section 3.1 to the subproblem \eqref{eq:spmesl} for $k=1,2,\ldots, p$.
The main procedures in the CD algorithm for the SPMESL are summarized
in the following two steps:
\begin{itemize}

\item {{\bf Updating $\hbetabs_{-k}$ for $1\le k \le p$:}}
Applying the CD algorithm with the warm-start strategy
for the following lasso subproblem:
for $k=1,2,\ldots,p$, 
\begin{equation} \label{eq:sublasso}
\hbetabs_{-k} = \argmin_{\bs{\beta}_{-k}~:~\beta_{kk}=0} \frac{\| {\bf x}_k - \bX ~\bs{\beta}_{-k} \|_2^2}{2n}
+ \hsigma_j \lambda_0 \|\bs{\beta}_{-k}\|_1,
\end{equation}
where $\bs{\beta}_{-k} = (\beta_{1,k},\ldots,\beta_{k-1,k},0,\beta_{k+1,k},\ldots,\beta_{p,k})^T \in \RR^{p}$.

\item {{\bf Updating $\hsigma_k$ for $1\le k \le p$}:}
For given $\lambda_0$ and $\hbetabs_{-k}s$, $\hsigma_k$s are obtained by the equation 
\begin{equation} \label{eq:sigma}
\hsigma_k =\frac{\| {\bf x}_k - \bX ~\hbetabs_{-k} \|_2}{\sqrt{n}},
\end{equation}
where $\hbetabs_{-k} = (\hbeta_{1,k},\ldots,\hbeta_{k-1,k},0,\hbeta_{k+1,k},\ldots,\hbeta_{p,k})^T$ 
is the solution to the problem \eqref{eq:sublasso}.

\end{itemize}
The CD algorithm for the SPMESL independently repeats the updating
$\hbetabs_{-k}$   and $\hsigma_k$ until convergence occurs for $k=1,2,\ldots, p$. 
After solving the $p$ scaled Lasso problems,
the final estimate  $\hat{\bs{\Omega}}$  of the precision matrix by the SPMESL is
obtained by the symmetrization \eqref{eq:symm}.
The whole CD algorithm with warm start strategy for the SPMESL is summarized in Algorithm \ref{alg:cd_spm}.

\begin{algorithm}[h]
\caption{CD algorithm with warm start strategy for the SPMESL}\label{alg:cd_spm}
\begin{algorithmic}[1]
\Require ${\bf X}$, $\lambda_0$, $\hat{\sigma}^{(0)}=1$, 
${\bf B} = (\hbeta_{ij}^{(0)})=(\hat{\bs{\beta}}_{-1}^{(0)}, \ldots, \hat{\bs{\beta}}_{-p}^{(0)})={\bf 0}$, 
convergence tolerance $\delta$.
\For{$k=1,\ldots, p$}  \smallskip 
\State ($\hbetabs_{-k}, \hsigma_{k}$) $\gets$ Apply Algorithm \ref{alg:cd} with ($\bx_k, \bX_{-k} = (\bx_1,\ldots, \bx_{k-1}, \bx_{k+1}, \ldots, \bx_p), \lambda_0$) 
 \smallskip
\EndFor
\For{$j=1,\ldots, p$} \Comment{Calculation of initial estimate of $\Omega$}
 \smallskip
\State $\homega_{jj} = \hsigma_j^{-2}$
\For{$k=1, \ldots, p$}  \smallskip
\If{$k \neq j$}  \smallskip
\State $\homega_{jk} = - \hbeta_{jk} {\homega}_{kk}$
 \smallskip
\EndIf
 \smallskip
\EndFor
\EndFor  \smallskip
\For{$j=1,\ldots, p-1$} \Comment{Symmetrization of $\Omega$}
 \smallskip
\For{$k=j+1, \ldots, p$}  \smallskip
\If{$|\homega_{jk}| > |\homega_{kj}|$}  \smallskip
\State $\homega_{jk} \gets \homega_{kj}$
\Else
\State $\homega_{kj} \gets \homega_{jk}$
 \smallskip
\EndIf
\EndFor
\EndFor
\Ensure $\hat{\bs{\Omega}} = (\homega_{jk})$
\end{algorithmic}
\end{algorithm}

\subsection{Parallel CD algorithm for SPMESL using GPU}

As we described in Section 3.2, the CD algorithm for the SPMESL solves $p$ independent scaled Lasso problems.
From this independence structure, $p$ elements in ${\bf B}=(\beta_{jk})$
can be updated in parallel, where $(p-1)$ elements
can simultaneously be updated in practice since $\beta_{kk}=0$ is fixed.
In addition, the computation of
$\hsigma$ is independent as well in the sense that
the update equation for $\hsigma_k$ only needs information
of  $\hbetabs_{-k}$. 
To describe the proposed parallel CD algorithm, we consider
the following joint minimization problem, which combines
$p$ scaled Lasso problems: 
\begin{equation} \label{eq:joint} 
(\widehat{\bf B}, \hat{\bs{\sigma}}) =
\argmin_{\{\bs{\beta}_{-k}, \sigma_k\}_{k=1}^p}
\sum_{k=1}^p \Big\{ \frac{\|\bx_k - \bX \bs{\beta}_{-k}\|_2^2}{2n\sigma_k}
+ \frac{\sigma_k}{2} + \lambda_0 \|\bs{\beta}_{-k}\|_1\Big\},
\end{equation}
where ${\bf B}=(\bs{\beta}_{-1},\ldots, \bs{\beta}_{-p})$ and $\bs{\beta}_{-k} = (\beta_{1,k},\ldots,\beta_{k-1,k},0,\beta_{k+1,k},\ldots,\beta_{p,k})^T$.  
As the updating equation \eqref{eq:sigma} for $\hsigma_k$ is simple
and easily parallelizable, we focus on the update for $\hat{\bf B}$
in this subsection. For the given iterative solution 
 $\hat{\bs{\sigma}}^{(r)}=(\hsigma_1^{(r)}, \ldots, \hsigma_p^{(r)})$, 
the subproblem for updating  $\hat{\bf B}^{(r+1)}$  can be
represented as follows: 
\begin{equation} \label{eq:joint_lasso}
\begin{array}{rcl}
\hat{\bf B}^{(r+1)}
&=& \DP \argmin_{\bs{\beta}_{-1},\ldots, \bs{\beta}_{-p}}
\sum_{k=1}^p g(\bs{\beta}_{-k};\hsigma_k^{(r)}, \lambda_0) =
\sum_{k=1}^p \Big\{ \frac{\|\bx_k - \bX \beta_{-k}\|_2^2}{2n} +
\lambda_k \|\bs{\beta}_{-k}\|_1\Big\}\\
&=& \DP
\argmin_{{\bf B}: b_{kk}=0,1\le k \le p} f({\bf B};\hat{\bs{\sigma}}^{(r)},\lambda_0) = \frac{1}{2n}\big\| \bX - \bX {\bf B}\big\|_F^2
+ \sum_{k=1}^p \lambda_k \|\bs{\beta}_{-k}\|_1,
\end{array}
\end{equation} 
where $\|{\bf A}\|_F = {\rm tr}({\bf A}^T {\bf A})=\big(\sum_{i,j} a_{ij}^2\big)^{1/2}$ is the Frobenius norm
of a matrix ${\bf A}$ and $\lambda_j = \hsigma_j^{(k)} \lambda_0$.
For the notational simplicity, hereafter, we denote $f({\bf B};\hat{\bs{\sigma}}^{(r)}, \lambda_0)$ and $\widehat{\bf B}^{(r+1)}$ as $f({\bf B})$ and $\widehat{\bf B}$, respectively. 
As $f({\bf B})$ is the sum of the smooth function of ${\bf B}$ (the
square of the Frobenius norm) and non-smooth convex functions,
$f({\bf B})$ satisfies the conditions of Theorem 4.1 in Tseng (2001).
Thus, the iterative sequence $\{\hat{\bf B}^{(r)}\}$ in the CD algorithm with cyclic rule converges to the stationary point of $f({\bf B})$,
where $\hat{\bf B}^{(r)}$ denotes the iterative solution
of the CD algorithm at the $r$th iteration, and
each iteration is counted when one cycle is finished
(i.e., $\beta_{12},\ldots,\beta_{p-1,p}$ have been updated.).

To develop the parallel CD (PCD) algorithm,
we consider a row-wise update for $\hat{\bf B}$, which is
one of the possible orderings in the cyclic rule.
The main idea of the parallel CD algorithm is that
the $p$ Lasso subproblems are independent in the sense that 
$\beta_{j,k}$ does not need information $(\beta_{j,l})$ for $l\neq k$. 
To be specific, 
let $\hbetabs^{(j), [cur]} =(\hbeta_{j,1}^{[cur]},\ldots,\hbeta_{j,j-1}^{[cur]}, 0, \hbeta_{j,j+1}^{[cur]}, \ldots,  \hbeta_{j,p}^{[cur]})$ and $\hbetabs^{(j), [next]}$ be the $j$th
row of the current and next iterative solutions of $\hat{\bf B}$, 
respectively. The following Proposition \ref{update} shows that the row of $\hat{\bf B}$ can be updated in parallel.

\begin{proposition} \label{update}
Let ${\bf E} = ({\bf e}_1,\ldots,{\bf e}_p)$ be
a current residual matrix defined with 
 ${\bf e}_k
= \bx_k - \bX \hbetabs_{-k}^{[cur]}$, 
and let $\hat{\bf B}^{[cur]}$
and $\widehat{\bf B}^{[next]}$ be
the current and next iterative solution for the coefficient
of the joint Lasso subproblem, respectively.
Suppose we update the rows of $\hat{\bf B}^{[next]}$
from the first row to the last row.
Then, each row  $\hbetabs^{(j),[next]}$ of $\hat{\bf B}^{[next]}$
for $j=1,\ldots,p$ 
can simultaneously be updated by following updating equations: 
\[
{\bf a}_j =(a_{jk})_{k=1}^p = \bx_j^T {\bf E}/n + \hbetabs^{(j),[cur]},~
{a}_{jj} \gets 0,~
\hbetabs^{(j),[next]} = {\bf S}_{\lambda_1,\ldots,\lambda_p}({\bf a}_j),~
{\bf E} \gets {\bf E} + \bx_j (\hbetabs^{(j),[cur]}-\hbetabs^{(j),[next]}),
\] 
where ${\bf S}_{\lambda_1,\ldots,\lambda_p}({\bf x})
= (\mbox{\rm Soft}_{\lambda_j}(x_j))_{1\le j \le p}$,
$\mbox{\rm Soft}_\lambda(x) = \mbox{\rm sign}(x)(|x|-\lambda)_+$, and
$(x)_+ = \max(x,0)$.
\end{proposition}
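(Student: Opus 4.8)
The plan is to check that the displayed row-wise updates reproduce, coordinate by coordinate, the ordinary cyclic coordinate descent updates for the joint objective $f({\bf B})$ of \eqref{eq:joint_lasso}, and then to observe that the $p$ scalar updates making up a single row update touch mutually disjoint data and hence may be performed simultaneously.

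First I would use the column separability $f({\bf B}) = \sum_{k=1}^p g(\bs{\beta}_{-k};\hsigma_k^{(r)},\lambda_0)$, in which $g(\bs{\beta}_{-k};\cdot)$ depends on ${\bf B}$ only through its $k$th column. Hence a cyclic CD sweep that visits the entries of ${\bf B}$ in the order ``row $1$ left-to-right, row $2$ left-to-right, $\dots$'' is, after restriction to the entries of column $k$, exactly a cyclic CD sweep for the $k$th standard Lasso problem \eqref{eq:sublasso}. Because $g(\cdot;\hsigma_k^{(r)},\lambda_0)$ is a smooth quadratic plus the separable penalty $\lambda_k\|\bs{\beta}_{-k}\|_1$ and the columns are scaled so that $\bx_j^T\bx_j=n$, the exact one-coordinate minimizer of $g$ in the direction $\beta_{jk}$, with the other entries held at their most recent values and subject to $\beta_{kk}=0$, is $\hbeta_{jk}^{[next]} = \mbox{Soft}_{\lambda_k}\!\left(\bx_j^T {\bf r}_{jk}/n\right)$ for $j\neq k$, where ${\bf r}_{jk} = \bx_k - \sum_{l\neq j}\bx_l\tilde{\beta}_{lk}$ is the partial residual and $\tilde{\beta}_{lk}$ is the most recent value of $\beta_{lk}$ (``next'' for $l<j$, ``cur'' for $l>j$); for $k=j$ the constraint simply gives $\hbeta_{jj}^{[next]}=0$. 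This is the standard soft-thresholding computation, so this step is routine once the correct residual has been identified.

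Next I would reconcile ${\bf r}_{jk}$ with the maintained residual matrix ${\bf E}$ by a short induction on the row index $j$. The invariant is that, just before row $j$ is processed, the $k$th column of the current ${\bf E}$ equals $\bx_k - \sum_l \bx_l \tilde{\beta}_{lk}$, i.e. the residual that incorporates the already-updated rows $1,\dots,j-1$ and the not-yet-updated rows $j,\dots,p$; it holds initially by the definition ${\bf e}_k = \bx_k - \bX\hbetabs_{-k}^{[cur]}$, and it is preserved by the rank-one correction ${\bf E}\gets{\bf E}+\bx_j(\hbetabs^{(j),[cur]}-\hbetabs^{(j),[next]})$ because updating row $j$ changes only the $j$th entry of each column, from $\hbeta_{jk}^{[cur]}$ to $\hbeta_{jk}^{[next]}$. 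Since row $j$ has not yet been touched we have $\tilde{\beta}_{jk}=\hbeta_{jk}^{[cur]}$, so ${\bf r}_{jk} = {\bf e}_k + \bx_j\hbeta_{jk}^{[cur]}$, and using $\bx_j^T\bx_j=n$ gives $\bx_j^T{\bf r}_{jk}/n = \bx_j^T{\bf e}_k/n + \hbeta_{jk}^{[cur]} = a_{jk}$, which is exactly the $k$th entry of ${\bf a}_j = \bx_j^T{\bf E}/n + \hbetabs^{(j),[cur]}$; the overwrite $a_{jj}\gets 0$ produces $\hbeta_{jj}^{[next]} = \mbox{Soft}_{\lambda_j}(0)=0$, enforcing $\beta_{jj}=0$. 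Therefore $\hbetabs^{(j),[next]}={\bf S}_{\lambda_1,\dots,\lambda_p}({\bf a}_j)$ reproduces the genuine CD updates of all $p$ entries of row $j$, and the subsequent correction to ${\bf E}$ restores the invariant for row $j+1$.

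Finally, for the parallelism claim I would note that the update of $\beta_{jk}$ uses only the $k$th column of ${\bf E}$, the scalar $\hbeta_{jk}^{[cur]}$, and the penalty $\lambda_k$, and that it lives in the $k$th subproblem, which shares no variable with the $l$th subproblem for $l\neq k$; the $p$ scalar updates forming the update of row $j$ are thus independent and can be executed simultaneously, precisely as encoded by the vector operations ${\bf a}_j=\bx_j^T{\bf E}/n+\hbetabs^{(j),[cur]}$, $\hbetabs^{(j),[next]}={\bf S}_{\lambda_1,\dots,\lambda_p}({\bf a}_j)$, and the rank-one update of ${\bf E}$. The only delicate point in the argument is the residual bookkeeping: keeping track of which rows currently carry ``next'' versus ``cur'' values and verifying that the maintained ${\bf E}$ stays equal to the partial-residual matrix that ordinary coordinate descent would use at that step; once this invariant is in hand, everything else is the routine soft-thresholding calculation.
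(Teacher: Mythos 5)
Your proposal is correct and follows essentially the same route as the paper's proof: both identify the within-row updates as the coordinate-wise soft-thresholding steps of the $p$ separate Lasso subproblems, verify that the maintained residual matrix ${\bf E}$ equals the partial residuals via the rank-one correction, and conclude independence of the $p$ scalar updates in a row from the column separability of the joint objective. Your explicit induction invariant on ${\bf E}$ is a slightly more formal packaging of what the paper demonstrates by working out the first row and then asserting the general pattern, but the substance is the same.
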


\begin{proof}
As described in Section 3.1, the CD algorithm updates
each element  $\hbeta_{j,k}^{[next]}$ in $\hbetabs^{(j),[next]}$ \
by  
\[
{\bf e}_k = \bx_k - \sum_{l<k} \bx_l \hat{\bs{\beta}}_{-l}^{[next]}
- \sum_{l>k} \bx_l \hat{\bs{\beta}}_{-l}^{[cur]}, \quad
a_{jk} = \bx_j^T {\bf e}_k / n + \hat{\beta}_{j,k}^{[cur]}, \quad
\hat{\beta}_{j,k}^{[next]} = \mbox{Soft}_\lambda(a_{jk}).
\] 
Consider updating the first row of $\hat{\bf B}^{(next)}$ by
the CD algorithm. Then, for $k=2,\ldots,p$, the above updating equations becomes 
\[
{\bf e}_k = \bx_k
- \sum_{l=2}^p \bx_l \hat{\beta}_{-l}^{[cur]}, \quad
a_{1k} = \bx_1^T {\bf e}_k / n + \hat{\beta}_{1,k}^{[cur]}, \quad
\hat{\beta}_{1,k}^{[next]} = \mbox{Soft}_\lambda(a_{1k}).
\] 
The update for $\hat{\beta}_{1,k}^{[next]}$
only needs information on the current residual vector  
${\bf e}_k$
and $\hbeta_{1,k}^{[cur]}$.   Hence,
the updates of $\hat{\beta}_{1,k}^{[next]}$
and $\hat{\beta}_{1,l}^{[next]}$ for $k\neq l$ are independent
in the sense that $\hbeta_{1,k}^{[next]}$ does not
depend on $\hbeta_{1,k}^{[next]}$, and vice versa.
Combining these equations for $j=2,\ldots,p$,
we can represent $(p-1)$ updating equations with the following
vector form:
\[
(0, \hbeta_{12}^{[next]},\ldots,\hbeta_{1p}^{[next]})
= (0, \mbox{Soft}_{\lambda_2}(a_{12}), \ldots, \mbox{Soft}_{\lambda_p}(a_{1p})),
\]
where $\lambda_j = \hsigma_j \lambda_0$ and 
${\bf a}_1^T = (0, {\bf e}_2^T \bx_1/n, \ldots, {\bf e}_p^T \bx_1/n)
+ (0, \hbeta_{1,2}^{[cur]},\ldots, \hbeta_{1,p}^{[cur]})
=\bx_1^T {\bf E}/n + \hbetabs^{(1),[cur]} - (\bx_1^T {\bf e}_1/n) {\bf i}_1$,
where $\hbeta_{1,1}^{[cur]}=0$ and ${\bf i}_j$ is the $j$th row of $p$-dimensional
identity matrix. 
After updating $\hbetabs^{(1),[next]}$,
the current residual matrix is also updated by 
${\bf E} \gets {\bf E} + \bX_1 (\hbetabs^{(1),[cur]}-
\hbetabs^{(1),[next]})$. 
Then, the updated ${\bf e}_k$ becomes $\bx_k - \bx_1 \hbeta_{1,k}^{[next]}
- \sum_{l=2}^p \bx_l \hbeta_{l,k}^{[cur]}$. 
Using these equations, we can express a general form of updating
equations as 
\[
{\bf a}_j =(a_{jk})_{k=1}^p = \bx_j^T {\bf E}/n + \hbetabs^{(j),[cur]},~
{a}_{jj} \gets 0,~
\hbetabs^{(j),[next]} = {\bf S}_{\lambda_1,\ldots,\lambda_p}({\bf a}_j),~
{\bf E} \gets {\bf E} + \bx_j (\hbetabs^{(j),[cur]}-\hbetabs^{(j),[next]}),
\] 
where, for computational simplicity, we
calculate 
${\bf a}_j$ by $\bx_j^T {\bf E}/n + \hbetabs^{(j),[cur]}$ and then set $a_{jj} = 0$.  
Applying this sequence
of updating equations from the first row ($j=1$) to the last row ($j=p$) of $\widehat{\bf B}^{[next]}$ is equivalent to the cyclic CD algorithm
for the joint Lasso subproblem.
\end{proof}

In Proposition \ref{update}, the row-wise updating equations consist of basic linear algebra
operations such as matrix-matrix multiplication and
element-wise soft-thresholding, which are adequate for parallel computation using GPUs. To fully utilize the GPUs,
we use the \texttt{cuBLAS} library for linear algebra operations
and develop CUDA kernel functions for the element-wise soft-thresholding, parallel update for  $\hat{\bs{\sigma}}$ , and symmetrization of  $\hat{\bs{\Omega}}$.   We refer this CD algorithm to the parallel CD (PCD) algorithm in the sense of that $p$ elements in each row of $\widehat{\bf B}$
are simultaneously updated if $p$ GPUs (i.e., $p$ CUDA cores) are available.
Note that  $\hbeta_{jj}$ for $j=1,\ldots,p$   are fixed 
with $0$ in the PCD algorithm, which is handled explicitly in the implementation.
The whole PCD algorithm with warm start strategy for the SPMESL is summarized in Algorithm \ref{alg:pcd_spm}. In Algorithm \ref{alg:pcd_spm}, we use a convergence criterion $\|{\bf B}^{(r+1)} - {\bf B}^{(r)}\|_\infty < \delta$ to check the convergence of ${\bf B}^{(r)}$, which is different
to the convergence criterion $\|\bs{\beta}_{-j}^{(r+1)} - \bs{\beta}_{-j}^{(r)}\|_{\infty} < \delta$ in the CD algorithm.  To ensure that the CD and the PCD algorithm provide the same solution, we show that the PCD algorithm obtains a solution that is sufficiently close to the solution of the CD algorithm if two algorithms use the same initial values in the following Theorem 1:
\begin{theorem}
For a given vector $(\hsigma_1, \ldots, \hsigma_p)$, a tuning parameter $\lambda_0$,
and a convergence tolerance $\delta>0$, 
let  ${\bs{\beta}}_{-k}^{(r)}$ and ${\bf b}_k^{(r)}$  be iterative solutions at the $r$th iteration by the CD algorithm with a convergence criterion  $\|\bs{\beta}_{-k}^{(r+1)} - \bs{\beta}_{-k}^{(r)}\|_{\infty} < \delta$  and the PCD algorithm with a convergence criterion  $\|{\bf B}^{(r+1)} - {\bf B}^{(r)}\|_\infty < \delta$,   respectively. Let $\hat{\bs{\beta}}_{-j}$ and $\hat{\bf b}_j$ be the solutions of the CD and the PCD that satisfy the given convergence criteria. Suppose that the CD and the PCD algorithms use the same initial point  ($\hat{\bs{\beta}}_{-k}^{(0)}=\hat{\bf b}_k^{(0)}, 1\le j \le p)$. Then, $\|\hat{\bf b}_k - \hat{\bs{\beta}}_{-k}\|_\infty$   is also bounded by $\delta$.
\end{theorem}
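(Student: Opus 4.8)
The plan is to first show that, column by column, the CD and the PCD iterations generate exactly the same sequence of points, so that any discrepancy between $\hat{\bf b}_k$ and $\hat{\bs{\beta}}_{-k}$ is due solely to the two algorithms using different stopping tests. By Proposition~\ref{update}, one pass of the PCD row updates on ${\bf B}$ is precisely one cyclic, ascending-order coordinate-descent sweep of the joint objective $f({\bf B})$ in \eqref{eq:joint_lasso} with the diagonal of ${\bf B}$ pinned at $0$. Since $f$ and this constraint split across columns---the update of $\beta_{j,k}$ uses only the residual ${\bf e}_k={\bf x}_k-{\bf X}\bs{\beta}_{-k}$ and the current value of $\beta_{j,k}$, never a $\beta_{j,l}$ with $l\neq k$---such a sweep acts on column $k$ exactly as one sweep of Algorithm~\ref{alg:cd} applied to the $k$-th subproblem \eqref{eq:sublasso}. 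Hence, started from the common point $\hat{\bf b}_k^{(0)}=\hat{\bs{\beta}}_{-k}^{(0)}$, the PCD iterate restricted to column $k$ and the CD iterate for column $k$ coincide at every iteration, ${\bf b}_k^{(r)}=\bs{\beta}_{-k}^{(r)}$; both outputs therefore lie on one and the same trajectory of the sweep map $T_k$ (``one cyclic CD sweep of \eqref{eq:sublasso}''), and only the stopping index can differ.

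\textbf{Step 2 (the two stopping indices).} Let $R_k$ be the first $r\ge1$ with $\|\bs{\beta}_{-k}^{(r)}-\bs{\beta}_{-k}^{(r-1)}\|_\infty<\delta$, so that $\hat{\bs{\beta}}_{-k}=\bs{\beta}_{-k}^{(R_k)}$, and let $R$ be the first $r\ge1$ with $\|{\bf B}^{(r)}-{\bf B}^{(r-1)}\|_\infty<\delta$, so that $\hat{\bf b}_k={\bf b}_k^{(R)}=\bs{\beta}_{-k}^{(R)}$. From $\|{\bf B}^{(r)}-{\bf B}^{(r-1)}\|_\infty=\max_{1\le l\le p}\|\bs{\beta}_{-l}^{(r)}-\bs{\beta}_{-l}^{(r-1)}\|_\infty$, the PCD test at $r=R$ gives $\|\bs{\beta}_{-l}^{(R)}-\bs{\beta}_{-l}^{(R-1)}\|_\infty<\delta$ for \emph{every} $l$, hence $R_l\le R$ for all $l$; in particular $R\ge R_k$. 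Thus $\|\hat{\bf b}_k-\hat{\bs{\beta}}_{-k}\|_\infty=\|\bs{\beta}_{-k}^{(R)}-\bs{\beta}_{-k}^{(R_k)}\|_\infty$ with $R\ge R_k$, and the claim reduces to showing that the column-$k$ trajectory does not leave the $\delta$-ball about $\bs{\beta}_{-k}^{(R_k)}$ once its own stopping test has been met there.

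\textbf{Step 3 (controlling the drift --- the main obstacle).} It remains to bound $\|\bs{\beta}_{-k}^{(R)}-\bs{\beta}_{-k}^{(R_k)}\|_\infty$ by $\delta$. The intended route is to analyse the per-sweep displacement $d_k(r):=\|\bs{\beta}_{-k}^{(r)}-\bs{\beta}_{-k}^{(r-1)}\|_\infty$ after it first drops below $\delta$ at $r=R_k$: the sweep $T_k$ is the composition of the $p-1$ coordinate-wise soft-thresholding (proximal) steps of the fixed convex function in \eqref{eq:sublasso}, so by the coordinate-descent convergence theorem of Tseng (2001) used above the iterates $\bs{\beta}_{-k}^{(r)}$ converge to the minimizer set of \eqref{eq:sublasso}; using this together with the (firm) nonexpansiveness of individual soft-thresholding updates, one argues that $d_k$ does not rebound above $\delta$ and, more to the point, that the tail of the trajectory stays inside the $\delta$-ball about $\bs{\beta}_{-k}^{(R_k)}$. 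Granting this and combining with $R\ge R_k$ from Step~2 yields $\|\hat{\bf b}_k-\hat{\bs{\beta}}_{-k}\|_\infty\le\delta$, which is the assertion. I expect this last step to be the real obstacle: the weaker bound $\|\hat{\bf b}_k-\hat{\bs{\beta}}_{-k}\|_\infty=O(\delta)$ follows quickly from linear convergence of cyclic CD plus a norm-equivalence estimate on the relevant subspace, but recovering the constant exactly $1$ in the $\ell_\infty$ norm used by the stopping tests requires exploiting the precise structure of the soft-thresholded Gauss--Seidel sweep---in particular the normalization ${\bf x}_j^T{\bf x}_j=n$ that renders each one-dimensional update an exact $1$-strongly-convex proximal step.
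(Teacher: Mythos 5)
Your Steps 1 and 2 coincide with the paper's argument: the paper likewise invokes Tseng (2001) (conditions (B1)--(B3) and (C1)) to obtain convergence of the cyclic CD iterates for each column subproblem, observes that the CD and PCD generate identical column trajectories $\bs{\beta}_{-k}^{(r)}={\bf b}_k^{(r)}$ from the common initial point, and uses $\|\bs{\beta}_{-k}^{(r)}-\bs{\beta}_{-k}^{(r-1)}\|_\infty\le\|{\bf B}^{(r)}-{\bf B}^{(r-1)}\|_\infty$ to conclude that the PCD stopping index dominates the CD one (your $R\ge R_k$, the paper's $K_{PCD}\ge K_{CD}$). The genuine gap is your Step 3, which you yourself flag: you never actually establish $\|\bs{\beta}_{-k}^{(R)}-\bs{\beta}_{-k}^{(R_k)}\|_\infty\le\delta$. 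Nonexpansiveness of the individual soft-thresholding updates together with convergence of the iterates does not prevent the trajectory from drifting a total distance larger than $\delta$ between the two stopping indices --- a sum of many per-sweep displacements each below $\delta$ can be large --- so the claim that the tail stays inside the $\delta$-ball about $\bs{\beta}_{-k}^{(R_k)}$ is precisely the assertion to be proved, not a consequence of anything you have shown.

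For comparison, the paper closes this step by noting that the convergent sequence $\{\bs{\beta}_{-k}^{(r)}\}$ is Cauchy in $(\RR^{p},\|\cdot\|_\infty)$ and then taking $K=K_{CD}$, $u=K_{CD}$, $v=K_{PCD}$ in the definition of a Cauchy sequence to conclude $\|\bs{\beta}_{-k}^{(u)}-\bs{\beta}_{-k}^{(v)}\|_\infty<\delta$. You should be aware that this is itself a leap: the Cauchy property guarantees the existence of \emph{some} index beyond which all pairwise differences fall below $\delta$, but it does not license identifying that index with the first one at which a single consecutive difference drops below $\delta$. So your instinct that recovering the constant exactly $1$ is the real difficulty is sound; a rigorous completion would require a quantitative contraction or error-bound estimate for the sweep map (or a weakening of the conclusion to $O(\delta)$, as you suggest). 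As written, however, your proposal leaves the decisive inequality unproved, and so does not constitute a complete proof of the stated theorem.
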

\begin{proof} 
As the function $g(\bs{\beta}_{-k};\hsigma_k, \lambda_0)$ is convex with respect to 
$\bs{\beta}_{-k}$, it is easy to show that the coordinate-wise minimization of $g(\bs{\beta}_{-k};\hsigma_k, \lambda_0)$ satisfies the conditions (B1)--(B3) and (C1) in \cite{Tseng2001}. To see this, let $\eta = (\eta_1,\ldots,\eta_{p-1})^T=(\beta_{1k},\ldots,\beta_{k-1,k},\beta_{k+1,k},\ldots,\beta_{pk})^T$, $\tau = (\tau_1,\ldots,\tau_{p-1})^T$, and
$\tau_m = \lambda_{m}$ for $1\le m \le k-1$ and $\tau_m=\lambda_{m+1}$ for $k\le m\le p-1$.
We further let $f_0(\eta) = \frac{1}{2n}\|\bx_j - \bX_{-j} \eta\|_2^2$ and $f_m(\eta_m) = \tau_m | \eta_{m}|$ for $1\le m \le p-1$,
where $\bX_{-j} = (\bx_1,\ldots, \bx_{j-1},\bx_{j+1},\ldots, \bx_p)$. Then, we can represent $g(\bs{\beta}_{-k};\hsigma_k, \lambda_0)$ as $f(\eta)=f_0(\eta) + \sum_{m=1}^{p-1} f_m(\eta_{m})$.
With this representation, it is trivial that $f_0$ is continuous on ${\rm dom} f_0$  (B1) and $f_0, f_1, \ldots, f_{p-1}$ are lower semi-continuous (B3). As $f_0, f_1, \ldots, f_{p-1}$ are convex functions, the function $\eta_m \mapsto f(\eta_1,\ldots, \eta_{p-1})$ for each $m\in \{1,\ldots, p-1\}$ and $(\eta_l)_{l\neq m}$
is also convex and hemivariate (B2).
The function $f_0$ also satisfies that ${\rm dom} f_0$ is open and $f_0$ tends to $\infty$ at every boundary point of ${\rm dom} f_0$ because the domain of $f_0(\eta)$ is $\RR^{p-1}$ and $f_0(\eta)$ is the sum of squares of errors.
Thus, by Theorem 5.1 in \cite{Tseng2001},
the cyclic CD algorithm guarantees that $\bs{\beta}_{-k}^{(r)}$ converges to a stationary point of $g(\bs{\beta}_{-k};\hsigma_k, \lambda_0)$.
As the same updating order $(1\to 2 \to \cdots \to p)$ and equation are applied
with the same initial values in the proposed CD and PCD algorithms, the sequence $\{{\bf b}_k^{(r)}\}$ by
the PCD is equivalent to the sequence $\{ \bs{\beta}_{-k}^{(r)}\}$.
That is,  $\bs{\beta}_{-k}^{(r)}={\bf b}_k^{(r)}$ for $r\ge 0$.
Let $K_{CD}$ and $K_{PCD}$ be the iteration numbers that satisfies the convergence criteria of the CD and the PCD, respectively.
As $\| \bs{\beta}_{-k}^{(r)} - \bs{\beta}_{-k}^{(r-1)}\|_\infty=\| {\bf b}_k^{(r)} - {\bf b}_k^{(r-1)}\|_\infty \le \| {\bf B}^{(r)} - {\bf B}^{(r-1)}\|_\infty$, it is satisfied that $K_{PCD} \ge K_{CD}$.
As the $p$-dimensional Euclidean space with $L_\infty$-norm is Banach space,
the convergent sequence $\{ \bs{\beta}_{-k}^{(r)}\}$ and $\{ {\bf b}_{k}^{(r)}\}$ is
a Cauchy sequence.
Thus, from the definition of the Cauchy sequence, 
for a given $\delta>0$, there exists $K$ such that
$\| \bs{\beta}_{-k}^{(u)} - \bs{\beta}_{-k}^{(v)}\|_\infty < \delta$ for $u,v \ge K$.
Take $K=K_{CD}$, $u=K_{CD}$, and $v=K_{PCD}$. Then, $\bs{\beta}_{-k}^{(K_{CD})} = \hat{\bs{\beta}}_{-k}$
and $\bs{\beta}_{-k}^{(K_{PCD})} = \hat{\bf b}_k$.
Hence, the $L_\infty$-norm of the difference of $\hat{\bs{\beta}}_{-k}$ and
$\hat{\bf b}_k$ is bounded by $\delta$.
\end{proof}
For convergence of $\sigma_j$, we also use the same convergence criterion 
$|\sigma_j^{(r)}- \sigma_j^{(r-1)}|<\delta$ as in the CD algorithm.
To reduce the computational costs in the PCD algorithm, at each iteration, we remove some 
columns of ${\bf B}$ in the problem if the corresponding $\sigma_j$ satisfies the 
convergence criterion $|\sigma_j^{(r)}- \sigma_j^{(r-1)}|<\delta$.
This additional procedure needs the rearrangement of the coefficient matrix ${\bf B}$. In the implementation, we use an index vector and a convergence flag vector to
implement the additional rearrangement procedure efficiently.
Thus, the  PCD algorithm requires more computational costs than
the CD algorithm for the SPMESL as the CD algorithm runs consequently for $j=1,\ldots, p$ and does not need the rearrangement procedure.  
In the next section, however, we numerically show that the PCD algorithm becomes more
efficient compared to the CD algorithm when either the number of variables
or the sample size increases, although the PCD has more computational costs.

\begin{algorithm}
\caption{Parallel CD algorithm with warm start strategy for the SPMESL}\label{alg:pcd_spm}
\begin{algorithmic}[1]
\Require ${\bf X}$, $\lambda_0$, $\hat{\sigma}^{(0)}=1$,
${\bf B}^{(0)} = (\hbeta_{ij}^{(0)})=(\hat{\bs{\beta}}_{-1}^{(0)}, \ldots, \hat{\bs{\beta}}_{-p}^{(0)})={\bf 0}$,
convergence tolerance $\delta$.
\State Set $n_c=p$, $I=(1,\ldots,p)$, and $F = (1,\ldots,1)$ 
\Repeat{~~$r=0, 1, 2 \ldots$}  \smallskip 
\State $\lambda \gets (\hat{\sigma}_{I_1}^{(r)}\lambda_0,\ldots,\hat{\sigma}_{I_{n_c}}^{(r)}\lambda_0)$ \Comment{Initialization of joint lasso subproblem}  \smallskip
\State $\hat{\bf B}^{[cur]} \gets \hat{\bf B}^{(r)}, \hat{\bf B}^{[next]} \gets \hat{\bf B}^{[cur]}$ \Comment{Warm start strategy} 
 \smallskip
\State ${\bf E} = (\bx_{I_1},\ldots, \bx_{I_{n_c}})- \bX \hat{\bf B}^{[cur]}$  \smallskip  
\Repeat{~~$m = 0,1,2,\ldots$} 
 \smallskip
\State $\widehat{\bf B}^{[cur]} \gets \widehat{\bf B}^{[next]}$
 \smallskip
\For{$j = 1,\cdots, p$}
 \smallskip 
\State ${\bf a} = \bx_j^T{\bf E}/n + \hbetabs^{(j),[cur]}$  \smallskip 
\State $a_{j} \gets 0$  \smallskip 
\State $\hbetabs^{(j),[next]} = {\bf S}_{\lambda_1,\ldots,\lambda_p}({\bf a})$
 \smallskip 
\State ${\bf E} = {\bf E} + \bx_i (\hbetabs^{(j),[cur]}-
\hbetabs^{(j),[next]})$  \smallskip 
\EndFor
 \smallskip
\Until{ $\|\hat{\bf B}^{[next]} - \hat{\bf B}^{[cur]}\|_\infty, < \delta$}
\Comment{End of joint lasso subproblem}  \smallskip
\State $\hat{\bf B}^{(r+1)} \gets \hat{\bf B}^{[next]}$
 \smallskip
\State Update $\displaystyle \hat{\sigma}_{I_j}^{(r+1)} =\frac{\|\bx_{I_j} - \bX \hbetabs_{-I_j}^{(r+1)}\|_2}{\sqrt{n}}$ in parallel  \smallskip
\State Calculate $F_j = I(|\hat{\sigma}_j^{(r+1)}-\hat{\sigma}_j^{(r)}| \ge \delta)$ in parallel  \smallskip
\State Set $l=0$ \smallskip
\For{$j=1, \cdots, n_c$} \smallskip
	\State {\bf if}~~$F_j=1$~~{\bf then} \smallskip \newline
	\hspace*{4.5em} $l \gets l + 1$, $I_l \gets j$ \smallskip \newline
	\hspace*{3em}{\bf end if}
\EndFor
\State Set $n_c = l$, $\hat{\bf B}^{[next]} \gets (\hat{\bs{\beta}}_{-I_1}^{[next]},\ldots, \hat{\bs{\beta}}_{-I_{n_c}}^{[next]})$  
\Until{ $n_c = 0$}  \smallskip
\State Calculate $\homega_{jj} = \hsigma_j^{-2}$ and $\homega_{jk} = - \hbeta_{jk} \hsigma_k^{-2}$ in parallel \Comment{Initial estimate
for $\Omega$}  \smallskip
\State
 \smallskip Update $\homega_{jk}$ in parallel \Comment{Symmetrization}  \smallskip \newline
\hspace*{1.5em}{\bf if}~~{$|\homega_{jk}| > |\homega_{kj}|$} {\bf then}  \smallskip \newline
\hspace*{3em}$\homega_{jk} \gets \homega_{kj}$  \smallskip \newline
\hspace*{1.5em}{\bf else} \newline
\hspace*{3em}$\homega_{kj} \gets \homega_{jk}$  \smallskip \newline
 \smallskip
\hspace*{1.5em}{\bf end if}
\Ensure $\hat{\bs{\Omega}} = (\homega_{jk})$
\end{algorithmic}
\end{algorithm}

\section{Numerical Study}

\subsection{Data construction and simulation settings}

In this section, we numerically investigate the computational
efficiency of the proposed CD and PCD algorithms
and the estimation performance of the SPMESL
with comparisons to other existing methods.
To proceed the comparison on various circumstances, 
we first consider four network structures for the precision matrix defined as follows:

\begin{itemize}
\item AR(1): AR(1) network is also known as a chain graph.
We define a precision matrix $\Omega$ for AR(1) as
\[
\Omega = (\omega_{ij})_{1\le i,j \le p}
= \left\{
\begin{array}{ll}
1 & \mbox{if } i = j \\
0.48 & \mbox{if } |i-j| = 1\\
0 & \mbox{otherwise}
\end{array} \right.
\]

\item AR(4): In AR(4) network, each node is connected to neighborhood nodes
whose distance is less than or equal to 4,
where the distance of two nodes $i$ and $j$ is defined as
$d(i,j)=|i-j|$. The precision matrix $\Omega$ corresponding to AR(4) network is defined as
\[
\Omega = (\omega^{ij})_{1\le i,j \le p}
= \left\{
\begin{array}{ll}
0.6^{|i-j|} & \mbox{if } |i-j| \le 4\\
0 & \mbox{otherwise}
\end{array} \right.
\]

\item Scale-free:
Degrees of nodes follow the power-law distribution having a form
$P(k) \propto k^{-\alpha}$,
where $P(k)$ is a fraction of nodes having $k$ connections and $\alpha$ is a preferential attachment parameter.
We set $\alpha = 2.3$, which is used in \cite{Peng2009}, and we generate a scale-free network structure by using
the Barab\'asi and Albert (BA) model \citep{BA1999}.
With the generated network structure $G=(V,E)$,
we define a precision matrix corresponding to $G=(V,E)$
by following the steps applied in \citep{Peng2009}:

\noindent
$
(i)~ \tilde{\Omega} = (\tilde{\omega}_{ij})_{1\le i,j \le p}
= \left\{
\begin{array}{ll}
1 & \mbox{if } i = j \\
U & \mbox{if } (i,j) \in E\\
0 & \mbox{otherwise}
\end{array} \right.,
$ $ U \sim \mathcal{U}([-1,-0.5]\cup[0.5,1])$.
 \smallskip\\
{$(ii)$ $\Omega =({\omega}_{ij})_{1\le i,j \le p}
=\frac{\tilde{\omega}_{ij}}{1.5 \sum_{k\neq i} |\tilde{\omega}^{ik}|}
$ \hspace{1cm}}
 \smallskip\\
{$(iii)$ $\Omega = (\Omega + \Omega^T)\big/2$ \hspace{2.4cm}}
 \smallskip\\
{$(iv)$ $\omega_{ii} = 1$ for $i=1,2,\ldots,p$ \hspace{1.5cm}}

\vspace{2mm}

\item Hub:
Following \citep{Peng2009}, for $p=100$, a hub network consists of 10 hub nodes whose degrees are around 15 and 90 non-hub nodes whose degrees lie between 1 and 3. Edges in the hub network are randomly selected with the above conditions. With the given network structure $G=(V,E)$,
we generate a precision matrix by the procedure described in
the scale-free network generation.

\end{itemize}

\noindent To avoid nonzero elements having considerably small magnitudes (i.e., the absolute value of element),
we generate $(p/100)$ subnetworks, each of which consists of 100 nodes,
and set nonzero elements having magnitudes less than 0.1 to 0.1 (i.e., ${\rm sign}(\omega_{ij})/10$)
for the scale-free and hub networks.
For example, we generate five subnetworks having 100 nodes when $p=500$
for the scale-free and hub networks.
We depict the generated four network structures in Figure \ref{fig:net}
for $p=500$. 
 Among four networks, AR(1) and AR(4) networks correspond the circumstances that the variables are measured in a specific order, and
the hub and scale-free networks are frequently observed in real-world problems such as the gene regulatory networks and functional brain networks.

\begin{figure}[!htb]
\begin{minipage}[l]{.45\linewidth}
\centerline{\epsfig{file=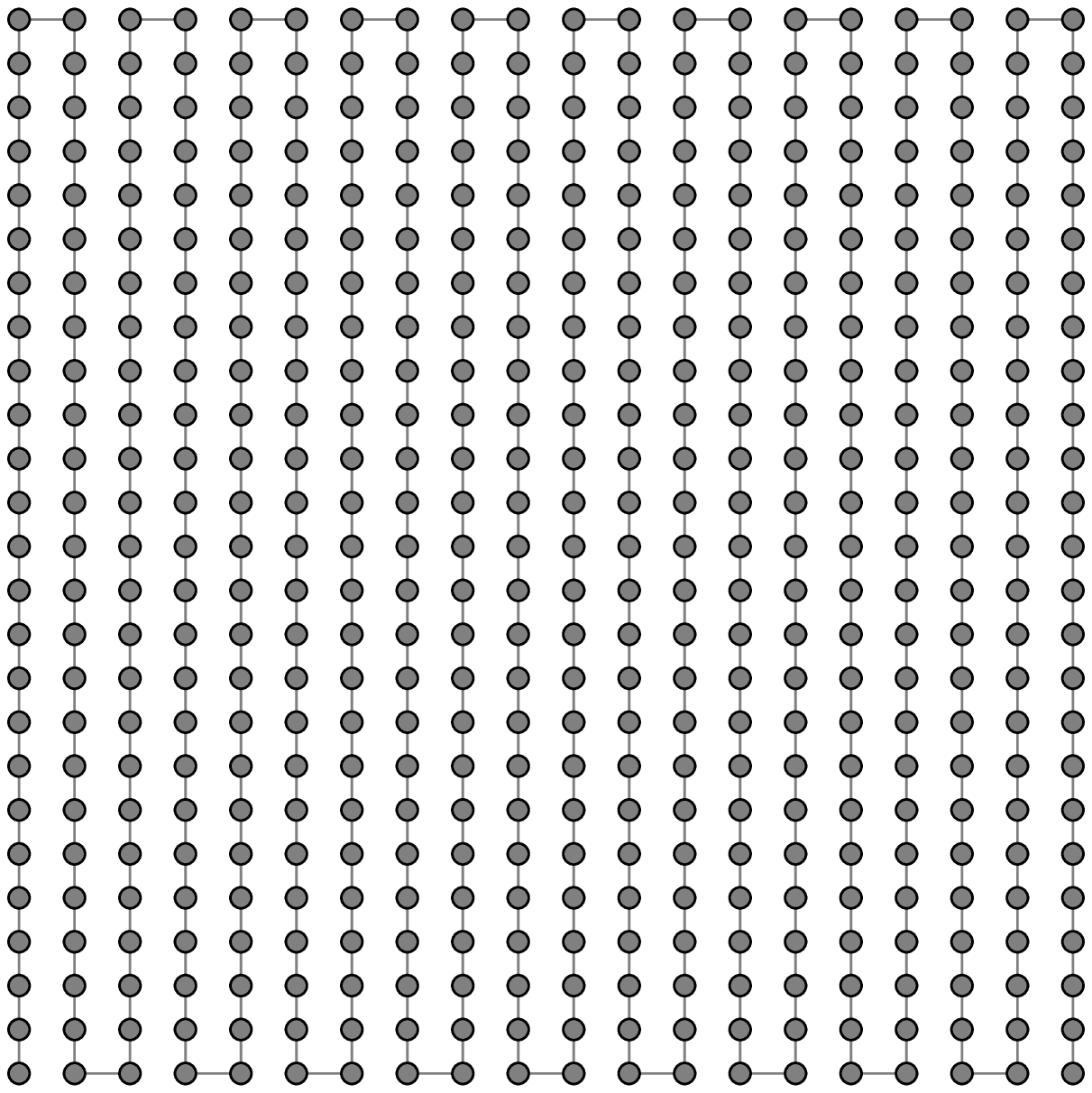,width=1\textwidth,height=0.32\textheight}}
\centerline{(a) AR(1)}
\end{minipage}
\begin{minipage}[l]{.45\linewidth}
\centerline{\epsfig{file=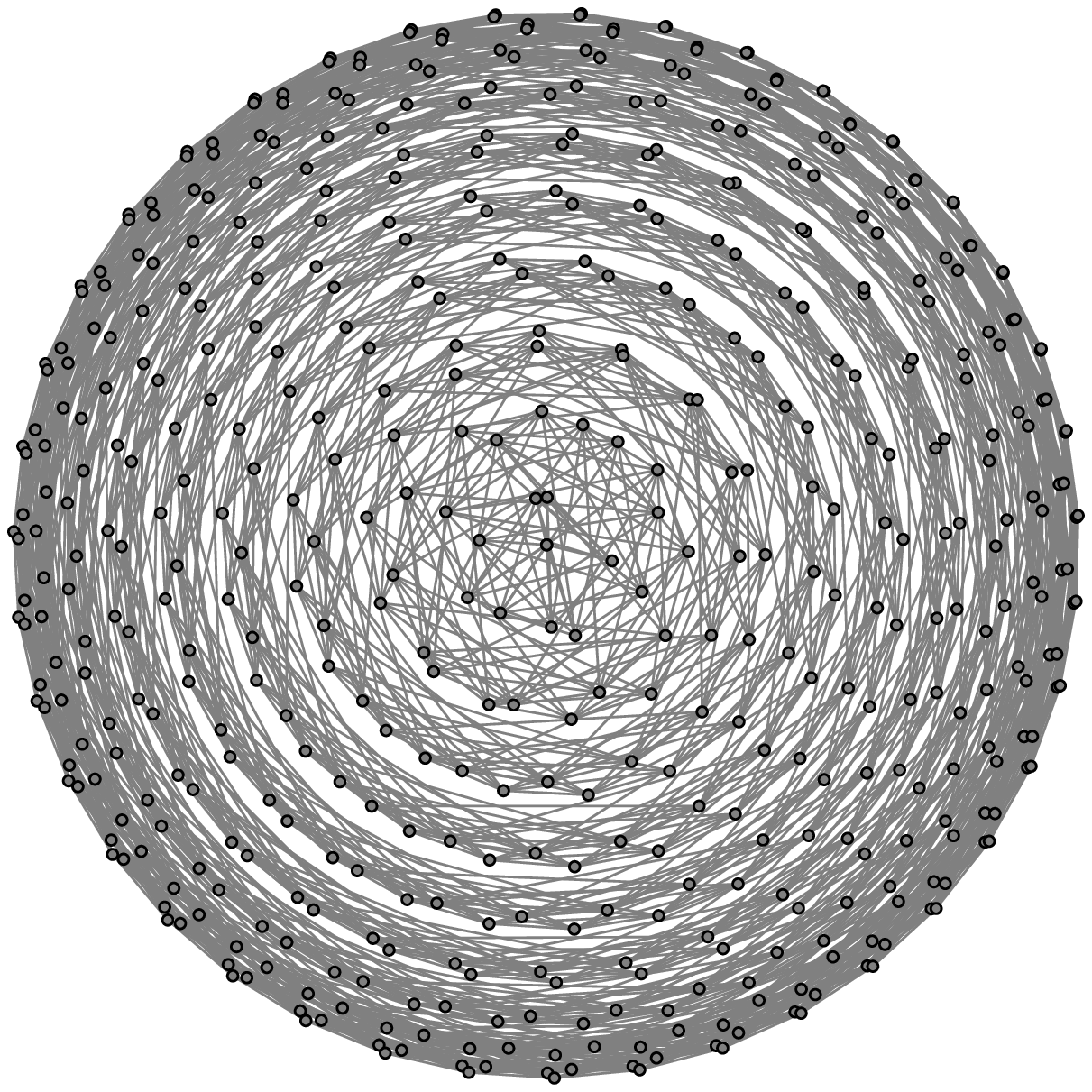,width=1\textwidth,height=0.32\textheight}}
\centerline{(b) AR(4)}
\end{minipage}
\begin{minipage}[c]{.45\linewidth}
\centerline{\epsfig{file=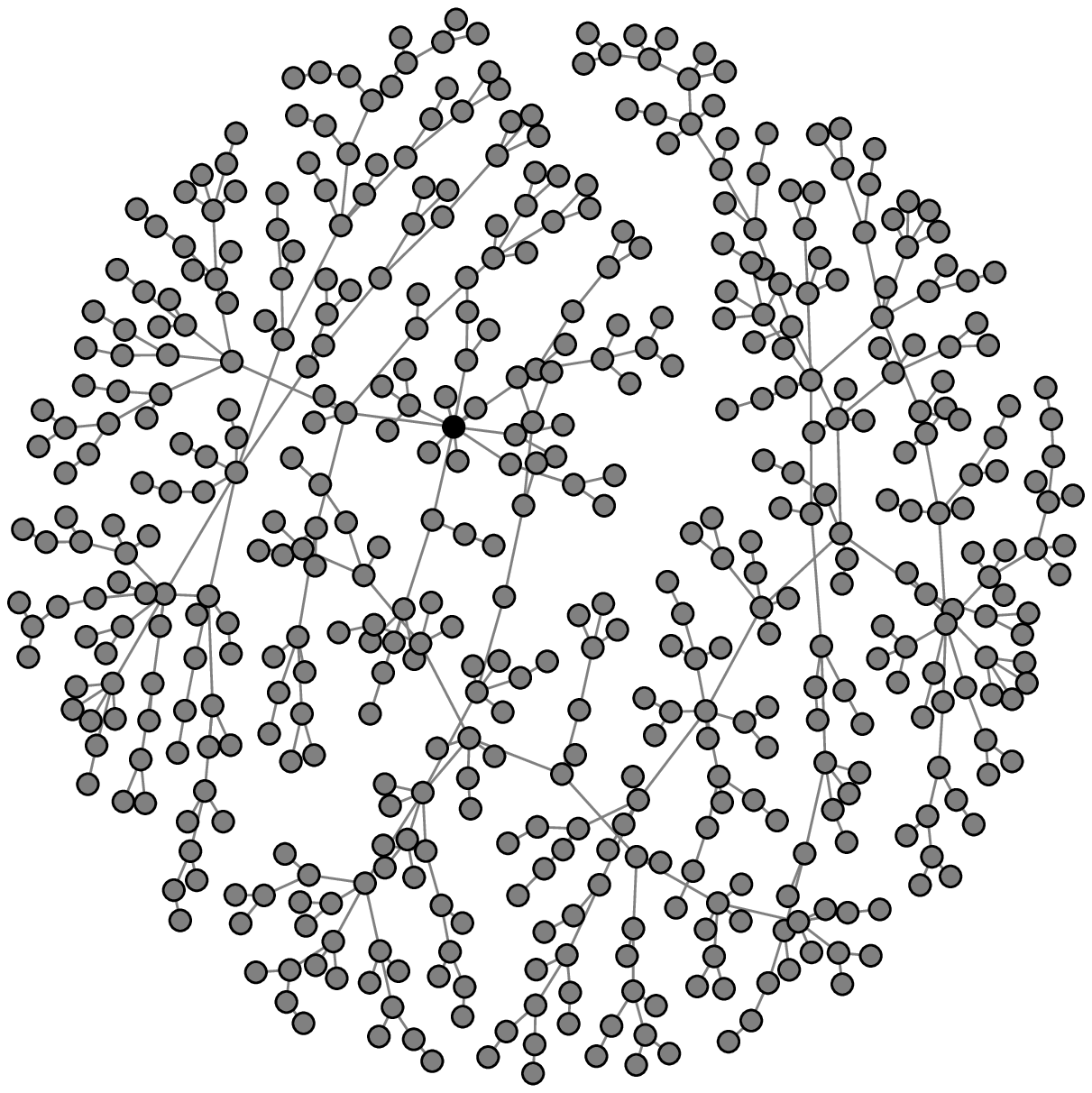,width=1\textwidth,height=0.32\textheight}}
\centerline{(c) SF}
\end{minipage}
\hspace{10mm}
\begin{minipage}[c]{.45\linewidth}
\centerline{\epsfig{file=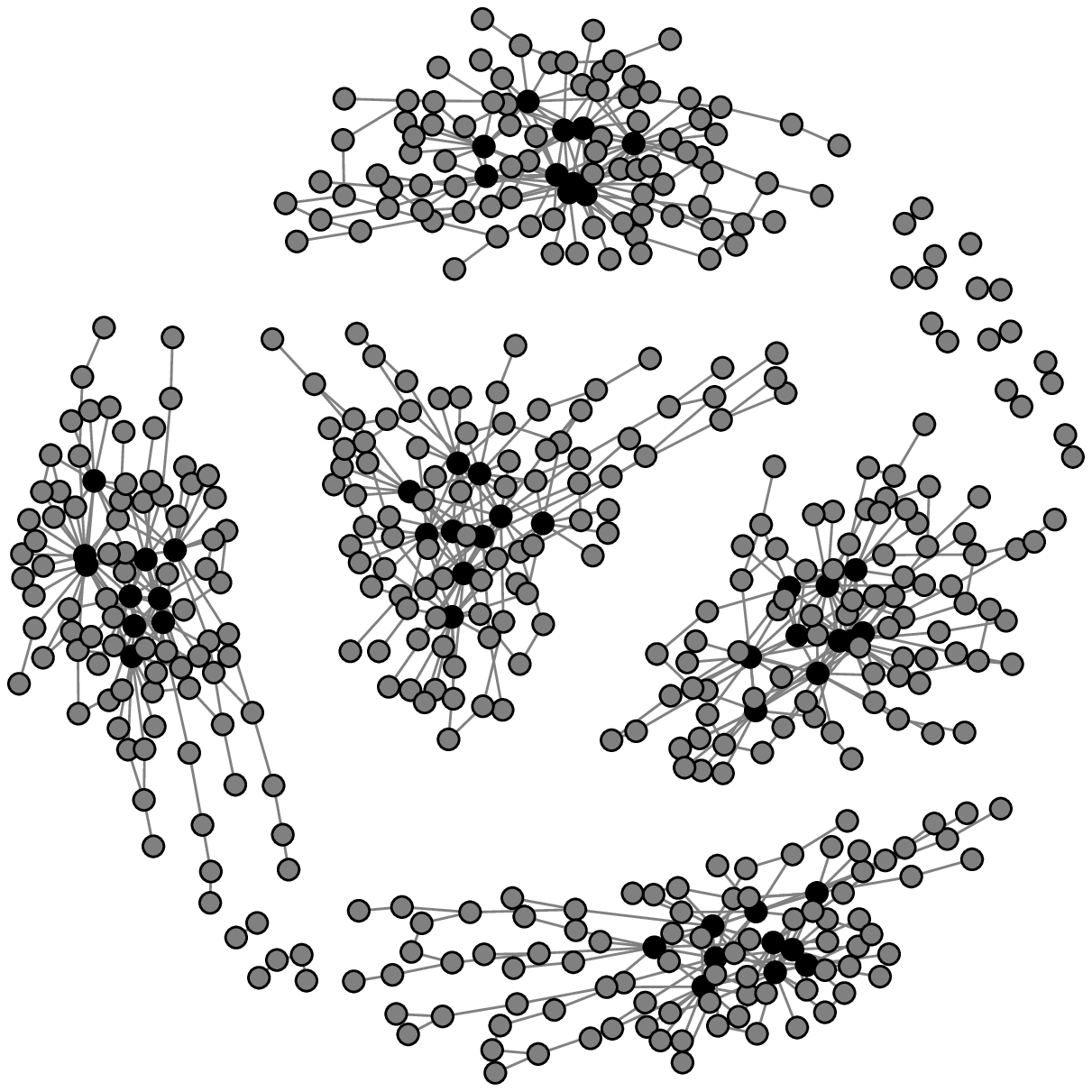,width=1\textwidth,height=0.32\textheight}}
\centerline{(d) Hub}
\end{minipage} 
\caption{Four network structures of precision matrix:
(a) AR(1), (b) AR(4), (c) Scale-free (SF), and (d) Hub-network (Hub).
Nodes in black denote nodes whose degrees are more than 9.} \label{fig:net}
\end{figure}

For comparison of the computational efficiency, 
we consider the number of variables $p=1000$, and sample size $n=250, 500$. To provide a benchmark of the computational efficiency,
we also consider the well-known existing methods
such as the CLIME \citep{Cai2011b}, graphical lasso (GLASSO) \citep{Friedman2008}, and convex partial correlation estimation method (CONCORD) \citep{Khare2015}.
As the computation times of the algorithms are affected by the level of sparsity
on the estimate of the precision matrix,
we set $\lambda_0 = \sqrt{4n^{-1}\log p}$ for LARS-CPU (SPMESL-LARS)
and CD-CPU (SPMESL-CD). Further, we search the tuning parameters for other methods
to obtain the level of sparsity similar to the one obtained by the SPMESL.
For example, for AR(1) and AR(4) networks with $p=1000$, the averages of the estimated edges of all methods are around 1000, which is $0.2$\% of
the total possible edges. 
With the chosen penalty levels,
the computation time for each method is
measured in CPU time (seconds) by using
a workstation (Intel(R) Xeon(R) W-2175 CPU (base: 2.50 GHz, max-turbo: 4.30 GHz) and 128 GB RAM)
with NVIDIA GeForce GTX 1080 Ti.
We measure the average computation times and standard errors
over 10 datasets.

For comparison of the estimation performance,
we consider evaluating the estimation performance of the SPMESL with $\lambda_{pb}=\sqrt{2} L_n (k/p)$, $\lambda_{univ}=\sqrt{2 n^{-1} \log (p-1)}$, and $\lambda_{ub}=\sqrt{4 n^{-1} \log p}$ as there are three different suggestions for $\lambda_0$ without a comparison of the estimation performance, where $k$ is a real solution of $k=L_1^4(k/p) + 2 L_1^2(k/p)$,
$L_n(t) = n^{-1/2} \Phi^{-1}(1-t)$, and $\Phi^{-1}(t)$ is the standard normal quantile function.
Here, we denote the SPMESL with $\lambda_{pb}$, $\lambda_{univ}$, $\lambda_{ub}$ as SPMESL-P, SPMESL-2, SPMESL-4, respectively.
As described in the computational efficiency comparison,
we apply  the three existing methods (CLIME, GLASSO, CONCORD)
to provide a benchmark of the estimation performance.
Hereafter, we referred these three methods to the \emph{tuning-search} methods.
To measure the estimation performance, we consider five performance measures-- sensitivity (SEN), specificity (SPE), false discovery rate (FDR),
miss-classification error rate (MISR), and Matthew's correlation coefficients (MCC)--for identifying the true edges and the Frobenius norm of $\bs{\Omega}^0 - \hat{\bs{\Omega}}$ for estimation error, where $\bs{\Omega}^0$ and $\hat{\bs{\Omega}}$
denote the true precision matrix and the estimate of the precision matrix, respectively.
The five performance measures for identification of the true edges
are defined as follows:
\begin{equation} \nonumber
\begin{array}{l} \medskip
{\rm SEN} \equiv \displaystyle \frac{\rm TP}{({\rm TP}+{\rm FN})},
{\rm SPE} \equiv \displaystyle \frac{{\rm TN}}{({\rm TN}+{\rm FP})},
{\rm FDR} \equiv \displaystyle \frac{{\rm FP}}{({\rm TP} + {\rm FP})},\\ \medskip
{\rm MISR} \equiv \displaystyle \frac{({\rm FP}+{\rm FN})}{p(p-1)/2},
~{\rm MCC}(\hat{\Omega},\Omega^0) \equiv
\frac{{\rm TP} \times {\rm TN} - {\rm FP} \times {\rm FN}}{\sqrt{({\rm TP}+{\rm FP})({\rm TP}+{\rm FN})({\rm TN}+{\rm FP})({\rm TN}+{\rm FN})}},
\end{array}
\end{equation}
where ${\rm TP} = \sum_{j<k} I(\hat{\omega}_{jk}\neq 0, \omega^0_{jk}\neq 0)$,
${\rm FP} = \sum_{j<k} I(\hat{\omega}_{jk}\neq 0, \omega^0_{jk}= 0)$,
${\rm TN} = \sum_{j<k} I(\hat{\omega}_{jk}= 0, \omega^0_{jk}= 0)$, and
${\rm FN} = \sum_{j<k} I(\hat{\omega}_{jk}= 0, \omega^0_{jk} \neq 0)$,

In the comparison of the estimation performance, we set the number of variables and sample size as 500 and 250, respectively. We generate samples
$\bX^1, \bX^2,\ldots, \bX^n \sim N({\bf 0},\bs{\Sigma})$,
where $\bs{\Sigma} = \bs{\Omega}^{-1}$ and $\bs{\Omega}$ is from a given network structure.
Unlike the comparison of computational efficiency, we need to choose criteria
for the selection of the optimal tuning parameter of the CLIME, GLASSO, and CONCORD.
For a fair comparison, we adopt the Bayesian information criterion (BIC), which is widely used in model selection, and search the optimal tuning parameter over an equally spaced grid $(0.10, 0.12, \ldots, 0.88, 0.90)$.
We generate 50 datasets, each of which we search the optimal tuning parameters for the CLIME, GLASSO, and CONCORD and apply $\lambda_{pb}$, $\lambda_{univ}$, $\lambda_{ub}$ for the SPMESL.

In this numerical study, we implement R package \texttt{cdscalreg} for the CD algorithm
with a warm start strategy for
the scaled Lasso and SPMESL, which is available at
 \url{https://sites.google.com/view/seunghwan-lee/software}.
For the other methods in this study,  we use R packages \texttt{fastclime}
for FASTCLIME, \texttt{glasso} for GLASSO, \texttt{gconcord} for CONCORD,
\texttt{scalreg} for the original algorithm for SPMESL.
Note that 
consider FASTCLIME \citep{Pang2014} for the CLIME, which obtains the CLIME estimator
efficiently and provides a solution path of the CLIME applying the parametric simplex method.
It is also worth noting that the GPUs we used are more efficient for conducting operations with single-precision values than double-precision values, but
the R programming language only supports the double-precision that
makes the efficiency of the GPU-parallel computation decrease.
Although this PCD implementation could decrease its computational
efficiency, we develop an R function for the PCD algorithm
to provide an efficient and convenient tool for R users.
If readers want to utilize GPU-parallel computation maximally, PyCUDA \citep{Klockner2012} is one of the convenient and favorable ways to implement CUDA GPU-parallel computation.

\subsection{Comparison results for computational efficiency}

Table \ref{tb:comp1} reports the average computation times and standard errors
 over 10 datasets.
From Table \ref{tb:comp1}, we numerically verify that the proposed algorithm based on the CD with warm-start strategy is more efficient than the original algorithm based on the LARS,
where the proposed algorithm (CD-CPU) is 110.2 and 466.3 times faster than LARS-CPU for
the worst case (AR(1), $n=250$) and the best case (AR(4), $n=500$), respectively.
For comparison of efficiency with other methods, overall, the CD-CPU is faster than FASTCLIME and slower than GLASSO and CONCORD. GLASSO is the most efficient algorithm in our numerical study. Its efficiency is from the sub-procedure that reduces the computational
cost by the pre-identification procedure for nonzero block diagonals of the estimate that rearranges the order of variables for a given tuning parameter described in \cite{Witten2011}.
The CONCORD is faster than the CD-CPU in general because the CD algorithm for the CONCORD is applied to minimize its objective function directly. 
However, the CD algorithm in the CD-CPU is repeatedly applied to solve the lasso subproblems. 
Even though the GLASSO and CONCORD are faster than the CD-CPU, 
the GLASSO and CONCORD are tuning-search methods while 
the CD-CPU is not. Thus, the CD-CPU becomes the most efficient
when the GLASSO and CONCORD need to evaluate more than
five tuning parameters.
For the efficiency of the PCD-GPU,
we can see that the PCD-GPU is faster than CD-CPU for
all cases except for the case of (Hub, $n=250$).
In addition, Table 1 also shows that the efficiency of the PCD-GPU increases
as the sample size increases. For example, 
all computation times of CD-CPU significantly increase when the sample size
increases from 250 to 500. However, there is no significant difference on
the computation times of PCD-GPU between the sample sizes 250 and 500.
This might show that the GPU device has idle processing units when $n=250$.
Note that the estimator of the SPMESL can be obtained by solving $p$ scaled Lasso
problems independently on multi CPU cores instead of on GPUs. However, the cost per core of CPU is more
expensive than that of GPU.
Moreover, the average computation times of the parallel computation
of the CD-CPU with 16 CPU cores with the R package \texttt{doParallel} (PCD-MPI in Table 1)
are around 20 seconds, which are worse than CD-CPU.
This inefficiency might be from the communication cost and the number of
CPU cores not enough for large $p$.

To verify the efficiency of PCD-GPU compared to CD-CPU, we conduct additional numerical studies
for CD-CPU and PCD-GPU
with $p=500, 1000, 2000, 5000$ and $n=250, 500, 1000$.
Table \ref{tb:comp2} reports
the average computation times and standard errors measured in CPU time (seconds)
for CD-CPU and PCD-GPU.
As shown in Table \ref{tb:comp2}, PCD-GPU becomes more efficient than
CD-CPU when either the number of variables or the sample size increases.
For example, CD-CPU is 1.05$\sim$1.94 times faster than PCD-GPU only 
for the Hub-network cases of $(p,n)=(500,250), (500,500), (1000,250)$; 
however, PCD-GPU outperforms CD-CPU for all the other cases and 4.71$\sim$11.65 times faster than CD-CPU when $p=5000$.
In Table \ref{tb:comp2}, we also find an advantage of the GPU-parallel computation.
Originally, the parallel computation in PCD-GPU applied to reduce the computational cost
depends on the number of variables. However, the additional numerical studies support
that the parallel computation in PCD-GPU also reduces the computation times when
the number of samples increases for a fixed $p$. This advantage is from the efficiency
of the GPU-parallel computation for the matrix-matrix and matrix-vector multiplications.
Thus, the additional numerical studies show that PCD-GPU is favorable
for the cases where either $p$ or $n$ is sufficiently large.

\begin{table}[!htb]
\begin{centering} 
\caption{The averages of the
computation times (sec.) over 10 datasets.
Numbers in the parentheses denote the standard errors.}     \label{tb:comp1}
\scalebox{0.9}{
\begin{tabular}{|c|c|c|c|c|c|c|c|c|c|} \hline
Network	&	$p$	&	$n$	&	\texttt{scalreg}	&	CD-CPU	&	PCD-GPU	 & PCD-MPI &	FASTCLIME	&	GLASSO	&	CONCORD \\	\hline
\multirow{4}{*}{AR(1)}	&	\multirow{4}{*}{1000}	&	\multirow{2}{*}{250}	&	937.7358&  8.5027&  4.4356&  18.8599  & 241.0991&  0.7091&  6.4348 	\\
	&		&		&	(4.4662)&(0.0519)&(0.0749)&   (0.7913)  & (4.0303)&(0.0081)&(0.1541)	\\
\cline{3-10}
	&		&	\multirow{2}{*}{500}	&	4345.7155&  13.9459&   4.2743& 21.4523 &  227.3807&   1.2521&  12.4249 	\\	
	&		&		&	(3.9313)&(0.0335)&(0.0397)&  (0.8523)  & (2.2561)&(0.1210)&(0.0830)	\\
\hline
\multirow{4}{*}{AR(4)}	&	\multirow{4}{*}{1000}	&	\multirow{2}{*}{250}	&	803.3684&  5.9992&  1.7896& 19.2027  &233.5897&  0.8030&  3.4643 \\
	&		&		&	(0.9841)&(0.0158)&(0.0274)& (1.0094) & (1.9512)&(0.0460)&(0.0410)	\\
	\cline{3-10}
	&		&	\multirow{2}{*}{500}	&	4043.0187&   8.6698&   2.2892& 20.0631 & 259.5524&   8.9721&   7.5435 	\\	
	&		&		&	(4.4534)&(0.0220)&(0.0317)& (0.8708) & (2.0104)&(0.2280)&(0.2210)	\\
\hline
\multirow{4}{*}{Scale-Free}	&	\multirow{4}{*}{1000}	&	\multirow{2}{*}{250}	&790.8713&  5.1367&  3.0949& 18.8765 & 238.7474&  0.6852&  3.5295	\\
	&		&		&	(0.6122)&(0.0376)&(0.0545)& (1.3815) & (1.7989)&(0.0010)&(0.0481)	\\
\cline{3-10}
	&		&	\multirow{2}{*}{500}	&	3819.9712&   9.1574&   3.0781& 19.2749 & 246.0107&   0.7945&   7.0473	\\
	&		&		&	(8.2088)&(0.0190)&(0.0255)& (0.7189) &(1.2268)&(0.0014)&(0.0775)	\\
\hline
\multirow{4}{*}{Hub}	&	\multirow{4}{*}{1000}	&	\multirow{2}{*}{250}	&787.6731&  4.7785&  7.3326 &  18.8566 &245.2853&  0.7512&  6.1875	\\
	&		&		&	(0.8286)&(0.0247)&(0.1355)& (1.3197) &(1.6676)&(0.0099)&(0.1024)	\\
\cline{3-10}
	&		&	\multirow{2}{*}{500}	&	3769.7531&   9.2215&   7.7584& 18.7167 & 250.2303&   2.2454&  12.7924\\
	&		&		&	(23.6487)&(0.0458)&(0.1942)& (0.5727) & (2.0064)&(0.1917)&(0.2707)	\\ \hline
\end{tabular}
}
\par\end{centering}
\end{table}

\begin{table}[!htb]
\begin{centering} 
\caption{The averages of the
computation times (sec.) over 10 datasets.
Numbers in the parentheses denote the standard errors.}   \label{tb:comp2}
\scalebox{0.92}{
\begin{tabular}{|c|c|cc|cc|cc|cc|} \hline
\multirow{2}{*}{Network}	&	\multirow{2}{*}{$n$}	&	\multicolumn{2}{|c|}{$p=500$}			&	\multicolumn{2}{|c|}{$p=1000$}			&	\multicolumn{2}{|c|}{$p=2000$}			&	\multicolumn{2}{|c|}{$p=5000$}			\\	\cline{3-10}
	&		&	CD	&	PCD	&	CD	&	PCD	&	CD	&	PCD	&	CD	&	PCD	\\	\hline
\multirow{6}{*}{AR(1)}	&	\multirow{2}{*}{250}	&	2.2800 & 1.9528 & 8.7067 & 4.4149 & 38.1056 & 12.0651 & 297.7769 &  60.1581	\\
	&		&	(0.0224) & (0.0414) & (0.0268) & (0.0765) & (0.0841) & (0.1121) & (0.2836) & (0.9851)	\\	\cline{2-10}
	&	\multirow{2}{*}{500}	&	3.4519 & 1.7117 & 13.8964 &  4.2744 & 59.2341 & 12.8309 & 436.7230 &  64.4670	\\
	&		&	(0.0134) & (0.0330) & (0.0419) & (0.0445) & (0.0736) & (0.1152) & (2.7395) & (0.7545)	\\	\cline{2-10}
	&	\multirow{2}{*}{1000}	&	5.8935 & 1.7364 & 23.5556 &  4.9784 & 99.7887 & 15.8946 & 700.8538 &  85.5756	\\
	&		&	(0.0265) & (0.0217) & (0.0636) & (0.0561) & (0.1460) & (0.1447) & (1.0246) & (0.6645)	\\
	\hline
\multirow{6}{*}{AR(4)}	&	\multirow{2}{*}{250}	&	1.4538 & 0.8794 & 6.1486 & 1.7679 & 27.9208 &  4.9945 & 232.7940 &  20.9527	\\
	&		&	(0.0077) & (0.0258) & (0.0265) & (0.0271) & (0.0666) & (0.2905) & (0.2484) & (0.3062)	\\
\cline{2-10}
	&	\multirow{2}{*}{500}	&	2.1363 & 0.9950 & 8.5920 & 2.2517 & 37.8047 &  6.2053 & 305.4613 &  26.2313	\\
	&		&	(0.0116) & (0.0177) & (0.0254) & (0.0372) & (0.0777) & (0.2523) & (0.3319) & (0.6105)	\\	\cline{2-10}
	&	\multirow{2}{*}{1000}	&	6.1868 & 1.5552 & 21.5172 &  4.1366 & 80.3323 & 12.1282 & 499.1462 &  57.1606	\\
	&		&	(0.0312) & (0.0155) & (0.1308) & (0.0284) & (0.3212) & (0.0558) & (2.8728) & (0.7015)	\\	\hline
\multirow{6}{*}{Scale-Free}	&	\multirow{2}{*}{250}	&	1.2493 & 1.1620 & 4.9605 & 2.9021 & 22.1757 &  6.8921 & 186.5103 &  35.0760	\\
	&		&	(0.0067) & (0.0328) & (0.0227) & (0.0426) & (0.0362) & (0.0762) & (0.2680) & (0.6203)	\\	\cline{2-10}
	&	\multirow{2}{*}{500}	&	2.1820 & 1.1572 & 8.7524 & 2.9931 & 38.4489 &  8.3851 & 300.9288 &  44.9643	\\
	&		&	(0.0094) & (0.0109) & (0.0196) & (0.0276) & (0.0646) & (0.1162) & (0.6319) & (0.5134)	\\	\cline{2-10}
	&	\multirow{2}{*}{1000}	&	3.7781 & 1.2210 & 15.0969 &  3.6376 & 65.1003 & 11.1231 & 479.8427 &  63.4638	\\
	&		&	(0.0074) & (0.0122) & (0.0141) & (0.0298) & (0.1194) & (0.0762) & (1.4722) & (0.9511)	\\	\hline
\multirow{6}{*}{Hub}	&	\multirow{2}{*}{250}	&	1.2021 & 2.3266 & 4.5724 & 6.9508 & 19.9179 & 11.6372 & 168.9348 &  35.8963	\\
	&		&	(0.0075) & (0.0434) & (0.0207) & (0.1257) & (0.0456) & (0.2135) & (0.2214) & (0.6634)	\\\cline{2-10}
	&	\multirow{2}{*}{500}	&	2.2632 & 2.3915 & 8.9738 & 7.6092 & 38.4337 & 15.7290 & 300.2697 &  53.2606	\\
	&		&	(0.0078) & (0.0335) & (0.0220) & (0.1866) & (0.0630) & (0.3070) & (0.4684) & (0.4222)	\\	\cline{2-10}
	&	\multirow{2}{*}{1000}	&	3.9263 & 2.6366 & 15.6040 &  9.6065 & 66.8115 & 21.4761 & 492.7247 &  77.6366	\\
	&		&	(0.0108) & (0.0342) & (0.0279) & (0.1127) & (0.1204) & (0.1523) & (1.1698) & (0.4552)	\\
\hline
\end{tabular}
}
\par\end{centering}
\end{table}

\subsection{Comparison results for estimation performance}

Tables \ref{tb:prf1} and \ref{tb:prf2} report the averages of the number of estimated edges ($|\hat{E}|$) and the six performance
measures over 50 data sets for AR(1), AR(4), Scale-free, and Hub networks.
From the results in Tables \ref{tb:prf1} and \ref{tb:prf2},
we can find several interesting features.
First, focusing on comparing SPMESL-P, SPMESL-2, and SPMESL-4, the SPMESL-P obtains the smallest estimation error in Frobenius norm, and the SPMESL-4 has the largest estimation error for all network structures.
The estimation error of SPMESL-2 is located near the middle of an interval defined by the estimation errors of the SPMESL-P and SPMESL-4.
However, for the performance in the identification of the true edges,
the SPMESL-4 has the smallest SEN and FDR for all network structures, and the SPMESL-2 obtains the lowest MISR and the highest MCC for Scale-free and Hub networks
while the SPMESL-P has the highest MISR and the lowest MCC for all network structures.
For AR(1) and AR(4) network structures, the MISR and the MCC of the SPMESL-2 are worse than those of the SPMESL-4, but the differences of the SPMESL-2 and the SPMESL-4 are small. For example, the difference in the MCC of the SPMESL-2 and the SPMESL-4 are 0.0195 and 0.0095 in an original scale for AR(1) and AR(4), respectively.

Second, by comparing the SPMESL and the tuning-search methods (CLIME, GLASSO, CONCORD), the SPMESL-2 and SPMESL-4 obtain considerably small FDRs
compared to the tuning-search methods. For example, the FDRs by the tuning-search methods are over 27\% for all cases, but the FDRs by the SPMESL-2 and the SPMESL-4 are less than 6.2\%.
The SPMESL-P has the lowest MCC and the highest FDR for Scale-free and Hub networks
and obtains the second-lowest MCC and the second-highest FDR for AR(4) networks, where only the GLASSO is worse than the SPMESL-P in terms of the MCC and FDR. For AR(1) network, the SPMESL with all penalty levels are better than the tuning-search methods for identifying the true edges, and the estimation errors of the SPMESL-P and SPMESL-2 are less than those of the tuning-search methods.

Finally, we compare the estimation performance of the tuning-search methods.
For the estimation error in the Frobenius norm,
the CONCORD outperforms CLIME and GLASSO for all cases, where
the CONCORD obtains similar estimation errors to the SPMESL-P for AR(4), Scale-free, and Hub networks.
However, for the identification of the true edges,
the CLIME obtains slightly better performance than the CONCORD for all networks except the AR(1) network.
For the AR(1) network, the CONCORD outperforms CLIME and GLASSO for estimation error and identification of the true edges.
In our numerical study, the CONCORD is favorable among the three tuning-search methods we consider. Note that we adopt the BIC for three tuning-search methods for a fair comparison, but the comparison results might be changed if we apply other
model selection criteria.

From the results of the estimation performance comparison,
we recommend the SPMESL with the universal penalty level $\lambda_{univ}$
if the target problem can accept the FDR level around 5\%; the uniform-bound penalty level is only preferred when the problem only accepts the small FDR less than 1\%.
Note that we do not recommend using the probabilistic bound $\lambda_{pb}$ as the SPMESL-P has the highest FDR among three penalty levels for the SPMESL, which are over 50\% for Scale-free and Hub networks, although the SPMESL-P obtains the lowest estimation error in Frobenius norm.

\begin{table}[!htb]
\begin{centering} 
\caption{For AR(1) and AR(4) networks with $p=500$ and $n=250$, the averages of the number of estimated edges, 
the five performance measures and
the Frobenius norms of difference
of the estimate and true precision matrix
over 50 datasets.
Numbers in the parentheses denote the standard errors.}   \label{tb:prf1}
\medskip
\scalebox{0.85}{
\begin{tabular}{|c|c|c|c|c|c|c|c||c|} \hline
Network & Method & $|\hat{E}|$ & SEN & SPE & FDR & MISR & MCC & $\| \hat{\Omega} - \Omega\|_F$ \\ \hline 
   & \multirow{2}{*}{CLIME} & 897.80 & 100.00 &  99.68 &  44.33 &   0.32 &  74.48 &   9.17 \\
   &   & (5.20) & (0.00) & (0.00) & (0.31) & (0.00) & (0.21) & (0.02) \\ \cline{2-9}
   &	 \multirow{2}{*}{GLASSO} &2134.48 &  100.00 &   98.68 &   76.57 &    1.31 &   48.07 &   12.74 \\
   &   &(14.52) & (0.00) & (0.01) & (0.16) & (0.01) & (0.17) & (0.02) \\ \cline{2-9}
   &	 \multirow{2}{*}{CONCORD} & 722.16 & 100.00 &  99.82 &  30.84 &   0.18 &  83.08 &   4.96 \\
AR(1)    &   & (2.93) & (0.00) & (0.00) & (0.29) & (0.00) & (0.17) & (0.01) \\ \cline{2-9}
$(|E|=499)$   &	 \multirow{2}{*}{SPMESL-P} &649.28 & 100.00 &  99.88 &  23.12 &   0.12 &  87.62 &   3.65 \\
   &   &(1.68) & (0.00) & (0.00) & (0.20) & (0.00) & (0.11) & (0.01) \\ \cline{2-9} 
   &	 \multirow{2}{*}{SPMESL-2} &524.74 & 100.00 &  99.98 &   4.90 &   0.02 &  97.51 &   4.55 \\
   &   & (0.77) & (0.00) & (0.00) & (0.14) & (0.00) & (0.07) & (0.01) \\ \cline{2-9}
   &	 \multirow{2}{*}{SPMESL-4} & 504.40 & 100.00 & 100.00 &   1.07 &   0.00 &  99.46 &   6.39 \\
   &   & (0.36) & (0.00) & (0.00) & (0.07) & (0.00) & (0.04) & (0.01) \\ \hline 
   &	 \multirow{2}{*}{CLIME} & 908.58 &  32.55 &  99.79 &  28.70 &   1.29 &  47.65 &  22.35 \\
   &   & (2.66) & (0.08) & (0.00) & (0.15) & (0.00) & (0.08) & (0.01) \\ \cline{2-9}
   &	 \multirow{2}{*}{GLASSO} & 988.96 &  28.86 &  99.66 &  41.66 &   1.47 &  40.36 &  23.17 \\
   &   & (10.79) & (0.07) & (0.01) & (0.52) & (0.01) & (0.17) & (0.01) \\ \cline{2-9}
   &	 \multirow{2}{*}{CONCORD} & 979.40 &  33.03 &  99.74 &  32.86 &   1.33 &  46.52 &  18.46 \\
AR(4)   &   & (3.33) & (0.08) & (0.00) & (0.19) & (0.00) & (0.10) & (0.01) \\ \cline{2-9} 
$(|E|=1990)$  &	 \multirow{2}{*}{SPMESL-P} & 1206.32 &   36.31 &   99.61 &   40.09 &    1.40 &   45.99 &   18.40 \\
   &   & (2.97) & (0.07) & (0.00) & (0.16) & (0.00) & (0.09) & (0.01) \\ \cline{2-9} 
   &	 \multirow{2}{*}{SPMESL-2} & 545.30 &  25.71 &  99.97 &   6.18 &   1.21 &  48.77 &  20.57 \\
   &   & (0.94) & (0.02) & (0.00) & (0.15) & (0.00) & (0.05) & (0.01) \\ \cline{2-9}
   &	 \multirow{2}{*}{SPMESL-4} & 499.00 &  25.05 & 100.00 &   0.11 &   1.20 &  49.72 &  22.72 \\
   &   & (0.17) & (0.01) & (0.00) & (0.02) & (0.00) & (0.01) & (0.01) \\ \hline 
\end{tabular}
}
\par\end{centering}
\end{table}

\begin{table}[!htb]
\begin{centering} 
\caption{For Scale-Free and Hub networks with $p=500$ and $n=250$,
the averages of 
the number of estimated edges, 
the five performance measures and
the Frobenius norms of differences
of the estimate and true precision matrix
over 50 datasets.
Numbers in the parentheses denote the standard errors.}   \label{tb:prf2}
\medskip
\scalebox{0.85}{
\begin{tabular}{|c|c|c|c|c|c|c|c||c|} \hline
Network & Method & $|\hat{E}|$ & SEN & SPE & FDR & MISR & MCC & $\| \hat{\Omega} - \Omega\|_F$ \\ \hline 
   & \multirow{2}{*}{CLIME} & 643.08 &  91.38 &  99.85 &  29.63 &   0.19 &  80.10 &   8.33 \\
   &   & (2.19) & (0.16) & (0.00) & (0.24) & (0.00) & (0.17) & (0.01) \\ \cline{2-9}
   &	 \multirow{2}{*}{GLASSO} & 810.52 &  92.84 &  99.72 &  43.08 &   0.31 &  72.53 &   7.79 \\
   &   & (7.21) & (0.19) & (0.01) & (0.52) & (0.01) & (0.31) & (0.02) \\ \cline{2-9}
   &	 \multirow{2}{*}{CONCORD} & 682.36 &  93.04 &  99.82 &  32.39 &   0.21 &  79.20 &   5.33 \\
Scale-Free    &   & (4.00) & (0.17) & (0.00) & (0.40) & (0.00) & (0.23) & (0.01) \\ \cline{2-9}
$(|E|=495)$   &	 \multirow{2}{*}{SPMESL-P} & 1022.56 &   94.60 &   99.55 &   54.18 &    0.47 &   65.66 &    5.15 \\
   &   & (3.66) & (0.14) & (0.00) & (0.17) & (0.00) & (0.14) & (0.01) \\ \cline{2-9} 
   &	 \multirow{2}{*}{SPMESL-2} & 457.88 &  87.94 &  99.98 &   4.92 &   0.07 &  91.40 &   6.57 \\
   &   & (1.06) & (0.15) & (0.00) & (0.16) & (0.00) & (0.11) & (0.01) \\ \cline{2-9}
   &	 \multirow{2}{*}{SPMESL-4} & 348.34 &  70.33 & 100.00 &   0.06 &   0.12 &  83.79 &   8.83 \\
   &   & (0.85) & (0.17) & (0.00) & (0.02) & (0.00) & (0.10) & (0.01) \\ \hline 
   &	 \multirow{2}{*}{CLIME} & 644.02 &  84.36 &  99.86 &  27.80 &   0.21 &  77.93 &   8.43 \\
   &   & (2.31) & (0.23) & (0.00) & (0.21) & (0.00) & (0.17) & (0.01) \\ \cline{2-9}
   &	 \multirow{2}{*}{GLASSO} & 881.76 &  87.44 &  99.68 &  45.03 &   0.38 &  69.10 &   7.89  \\
   &   & (10.37) & (0.27) & (0.01) & (0.59) & (0.01) & (0.31) & (0.02) \\ \cline{2-9}
   &	 \multirow{2}{*}{CONCORD} & 731.22 &  89.03 &  99.81 &  32.73 &   0.24 &  77.24 &   5.55 \\
Hub   &   & (5.43) & (0.19) & (0.00) & (0.51) & (0.00) & (0.27) & (0.01) \\ \cline{2-9} 
$(|E|=551)$  &	 \multirow{2}{*}{SPMESL-P} & 1094.14 &   91.32 &   99.52 &   53.99 &    0.51 &   64.62 &    5.37 \\
   &   & (3.33) & (0.14) & (0.00) & (0.14) & (0.00) & (0.12) & (0.01) \\ \cline{2-9} 
   &	 \multirow{2}{*}{SPMESL-2} & 474.36 &  81.22 &  99.98 &   5.64 &   0.10 &  87.49 &   6.78 \\
   &   & (1.55) & (0.22) & (0.00) & (0.15) & (0.00) & (0.13) & (0.01) \\ \cline{2-9}
   &	 \multirow{2}{*}{SPMESL-4} & 319.72 &  57.96 & 100.00 &   0.11 &   0.19 &  76.02 &   8.90 \\
   &   & (1.20) & (0.21) & (0.00) & (0.02) & (0.00) & (0.14) & (0.01)\\ \hline 
\end{tabular}
}
\par\end{centering}
\end{table}

\section{Conclusion}

In this paper,
we proposed an efficient coordinate descent
algorithm with the warm start strategy for sparse precision matrix estimation using the scaled lasso motivated by the empirical observation that
the iterative solution for the diagonal elements of the precision matrix
needs only a few iterations.
In addition, we also develop the parallel coordinate descent algorithm (PCD) for the SPMESL
by representing $p$ Lasso subproblems as the unified minimization problem.
In the PCD algorithm, we use a different convergence criterion
$\| {\bf B}^{(k+1)} - {\bf B}^{(k)}\|_\infty < \delta$ to check
the convergence of the PCD algorithm for the unified minimization problem.
We show that the difference in the iterative solutions of the CD and PCD
caused by the difference of the convergence criteria is also bounded by
the convergence tolerance $\delta$.

Our numerical study shows that the proposed CD algorithm is much
faster than the original algorithm of the SPMESL,
which adopts the LARS algorithm to solve the Lasso subproblems.
Moreover, the PCD algorithm with GPU-parallel computation
becomes more efficient than the CD algorithm when either the number of variables
or the sample size increases. 
For the optimal tuning parameter for the SPMESL, there are three
suggestions without the comparison of the estimation performance.
In the additional simulation,
we numerically investigate the estimation performance of
the SPMESL with the three penalty levels and the other three tuning-search methods.
From the comparison results,
the SPMESL with the uniform bound level and the universal penalty level outperform the three tuning-search methods.
Specifically,
the probabilistic bound level $\lambda_{pb} = \sqrt{2} L_n (k/p)$ provides
the estimate that has the smallest estimation error in terms of the Frobenius norm;
the uniform bound level $\lambda_{ub} = \sqrt{4 n^{-1} \log(p)}$ provides the estimate that
has the smallest FDR less than 1\%; and the universal penalty level $\lambda_{univ} = \sqrt{2 n^{-1} \log(p-1)}$
obtains either the best or second-best in terms of MCC and FDR.
Overall, we recommend the SPMESL with the universal penalty level and apply
the CD algorithm if $p$ is less than or equal to $1000$ and the PCD algorithm
when the target problem has more than 1000 variables and GPU-parallel computation
is available.

\subsection*{Acknowlegements}
Sang Cheol Kim's research was supported by  funding (2019-NI-092-00)
from the Research of Korea Centers for Disease Control and Prevention.
Donghyeon Yu's research was supported by the INHA UNIVERSITY Research Grant
and the National Research Foundation of Korea (NRF-2020R1F1A1A01048127).

\end{document}